\newcommand{\keywords}[1]{\par\addvspace\baselineskip
\noindent\keywordname\enspace\ignorespaces#1}
\newcommand{\UnivOf}[1]{{\mathcal{U}}_{#1}}
\newcommand{\Inrec}{{\mathsf{R}}}
\newcommand\cT{\mathcal{T}}
\newcommand{\cA}{{\mathcal{A}}}
\newcommand{\cV}{{\mathcal{V}}}
\newcommand{\cC}{{\mathcal{C}}}
\newcommand{\limplies}{\Rightarrow}
\newcommand{\lequiv}{\Leftrightarrow}
\newcommand\mizarsize\footnotesize
\lstdefinelanguage{Mizar}%
{columns=fullflexible,
keywords={scheme,schemes,environ,provided,where,
theorem,definition,radix,reserve,properties,struct,inhabited,expandable,attribute,
adjective,registration,coherence,defpred,cluster,from,sch,%
given,such,that,%
reflexivity, irreflexivity, symmetry, asymmetry, connectedness,and,attr,%
    antonym,existence,uniqueness,commutativity,idempotence,synonym,notation,%
    mode, means,func,pred, pred,equals,it,of,non,is,axiomatization,sethood,reconsider,redefine,%
    if,otherwise,proof,for,ex,being,holds,def,let,consider,take,not,contradiction,st,the,%
    be,thus,implies,assume,then,not,by,or,hence,thesis,end,iff,;,:,",",\#,as,qua},%
   sensitive=true,
   basicstyle={{\linespread{1.}\usefont{T1}{lmss}{m}{n}
}
},%
 keywordstyle={
 \usefont{T1}{lmss}{sbc}{n}
\selectfont
},%
  mathescape = true,
   morecomment=[l][\texttt]{::},
   literate={'}{{$\strut\mkern6mu\strut$}}1%
   {&}{{\usefont{T1}{lmss}{sbc}{n}\selectfont\texttt{\&}}}1%
   {-}{{\usefont{T1}{lmss}{m}{n}\selectfont\texttt{-}}}1%
   {->}{{$\rightarrow$}}1%
   {\{\}}{{\{\!\}}}1
  }
\def\miz{\lstinline}
\begin{document}

\vspace{4cm}
\begin{center}
The final publication is available at Springer via
\url{http://dx.doi.org/10.1007/978-3-030-23250-4_4}
\end{center}

\mainmatter  

\title{A Tale of Two Set Theories} 

\titlerunning{A Tale of Two Set Theories}

%
%
\author{Chad E. Brown\inst{1}
  \and
  Karol P\k{a}k\inst{2}\orcidID{{0000-0002-7099-1669}}
}
\authorrunning{Chad E. Brown \and Karol P\k{a}k}

\institute{Czech Technical University in Prague \and University of Bia\l{}ystok \url{pakkarol@uwb.edu.pl}}

%
%

\maketitle

\begin{abstract}
  We describe the relationship between two versions
  of Tarski-Grothendieck set theory: the first-order set theory
  of Mizar and the higher-order set theory of Egal.
  We show how certain higher-order terms and propositions in
  Egal have equivalent first-order presentations.
  We then prove Tarski's Axiom A (an axiom in Mizar) in Egal
  and construct a Grothendieck Universe operator (a primitive with axioms in Egal)
  in Mizar.
  \keywords{Formalized Mathematics, Theorem Proving, Set Theory, Proof Checking, Mizar}
\end{abstract}

\lstset{language=Mizar}

\section{Introduction}\label{sec:intro}

We compare two implemented versions of Tarski-Grothendieck (TG) set theory.
The first is the first-order TG implemented in Mizar~\cite{DBLP:conf/mkm/BancerekBGKMNPU15,JFR1980}
axiomatized using Tarski's Axiom A~\cite{Tarski1938,Trybulec89}.
The other is the higher-order TG implemented in Egal~\cite{EgalManual}
axiomatized using Grothendieck universes~\cite{SGA4}.
We discuss what would be involved porting Mizar developments into Egal
and vice versa.

We use Egal's Grothendieck universes (along with a choice operator)
to prove Tarski's Axiom A in Egal.
Consequently the Egal counterpart
of each of Mizar's axioms is provable in Egal
and so porting from Mizar to Egal should always be possible in principle.
In practice one would need to make Mizar's implicit reasoning using its
type system explicit, a nontrivial task outside the scope of this paper.

Porting from Egal to Mizar poses two challenges.
One is that many definitions and propositions in Egal make use of higher-order quantifiers.
In order to give a Mizar counterpart, it is enough to give a first-order reformulation
and prove the two formulations equivalent in Egal.
While this will not always be possible in principle, it has been possible
for the examples necessary for this paper.
The second challenge is to construct a Grothendieck universe operator in Mizar
that satisfies the properties of a corresponding operator in Egal.
We have constructed such an operator.

We give a brief introduction to Mizar and its version of first-order Tarski-Grothendieck in Section~\ref{sec:mizar}.
In Section~\ref{sec:egal} we introduce the new system Egal
and describe its version of higher-order Tarski-Grothendieck.
In Section~\ref{sec:hofo} we give a few examples of definitions and propositions
in Egal that can be reformulated in equivalent first-order forms.
These first-order versions have counterparts in Mizar.
Section~\ref{sec:axioma} discusses the Egal proof of Tarski's Axiom A.
In Section~\ref{sec:grothuniv}
we discuss the construction of a Grothendieck universe operator in Mizar.\footnote{At {\url{http://grid01.ciirc.cvut.cz/~chad/twosettheories.tgz}} one can find Egal, the Egal formalization files and the Mizar formalization files.}
Possibilities for future work are discussed in Section~\ref{sec:futurework}.

%

\section{Mizar and FOTG}\label{sec:mizar}

The Mizar system~\cite{Grabowski2015FDM} from its beginning aimed to
create a proof style that simultaneously
imitates informal mathematical proofs as much as possible and
and can be automatically verified to be logically correct.
A quite simple and intuitive reasoning formalism
and an intuitive soft type system play a major role
in the pursuit of Mizar’s goals.

The Mizar proof style is mainly inspired
by Ja\'s{}kowski~\cite{Jaskowski34} style of natural deduction
and most statements correspond to valid first-order
predicate calculus formulas. 
Over time the Mizar community has also added
support for syntax that goes beyond traditional first-order terms and formulas.
In particular, Mizar supports \miz{schemes} with predicate and function variables,
sufficient to formulate the Fraenkel replacement as one axiom in Mizar.
This axiom is sufficient to construct the set comprehension
$\{Fx|x \in X, Px\}$ (called {\it Fraenkel terms}) for a given set $X$, function $F$ and predicate $P$
in the Mizar language but it is impossible to define such a functor
for arbitrary $X$, $F$, $P$.
Therefore, in response to the needs of Mizar's users,
support for Fraenkel terms has been built into the system.
In fact Mizar supports a generalized notation
where the set membership relation $x \in X$ in the Fraenkel term has been replaced
by the type membership $x:\Theta$ if the Mizar type $\Theta$ has the \miz{sethood} property.
A Mizar type has the \miz{sethood} property if the collection of all objects of the type
forms a set (as opposed to a class).
Semantically, Mizar types are simply
unary first-order predicates over sets that can be parameterized by sets.
However, the type inference mechanisms make Mizar significantly more powerful and user-friendly.
The rules available for automatic type inference are influenced
by the author of a given script by
choosing the \miz{environ} (i.e., environment, see~\cite{JFR1980}).
By skillfully choosing the environment, an author can make a Mizar
article more concise and readable since the type system will
handle many inferences implicitly.
Mizar types must be inhabited and
this obligation must be
proven by a user directly in the definition of a given type
or before the first use if a type has the form of intersection of types.

Parallel to the system development, the Mizar community puts a significant effort into building
the Mizar Mathematical Library (MML)\cite{MML2017}.
The MML is the comprehensive repository of currently formalized mathematics in the Mizar system.
The foundation of the library, up to some details discussed below,
is first-order Tarski-Grothendieck set theory (FOTG).
This is a non-conservative extension of Zermelo–Fraenkel set theory (ZFC),
where the axiom of infinity has been replaced by Tarski's Axiom A.
Axiom A states that for every set $N$ there is a Tarski universe $M$
such that $N\in M$. A Tarski universe is essentially a set closed under
subsets and power sets with the property that every subset of the universe
is either a member of the universe or equipotent with the universe.
The statement of Axiom A in Mizar is shown in Figure~\ref{fig:mizartarskia}.

\begin{figure}
\vspace{-0.2cm}
\begin{lstlisting}
  reserve'N,M,X,Y,Z'for'set;
  theorem'::'TARSKI_A:1
    ex'M'st'N'in'M'&
       (for'X,Y'holds'X'in'M'&'Y'c='X'implies'Y'in'M)'&
       (for'X'st'X'in'M'ex'Z'st'Z'in'M'&'for'Y'st'Y'c='X'holds'Y'in'Z)'&
       (for'X'holds'X'c='M'implies'X,M'are_equipotent'or'X'in'M);
\end{lstlisting}
\vspace{-0.1cm}
\caption{Tarski's Axiom A in Mizar}\label{fig:mizartarskia}
\end{figure}

FOTG was not the only foundation considered for the library.
One of the main reasons it was chosen is the usefulness of Axiom A
in the formalization of category theory.
Namely, FOTG provides many universes
that have properties analogous to those of a class of all sets.
In particular, every axiom of ZFC remains true if we relativize quantifiers
to the given universe.

The axiom of choice can be proven in FOTG.
In fact Axiom A was used to prove Zermelo's well-ordering theorem
and the axiom of choice in an early MML article~\cite{wellord2.abs}.
Later changes to Mizar also yielded the axiom of choice in a more direct way
and we briefly describe the relevant changes.

While working with category theory in the Mizar system,
new constructions called
{\it permissive} definitions were introduced (implemented in Mizar-2 in the 80's~\cite{Grabowski2015FDM}).
Permissive definitions allow an author to make definitions
under assumptions where these assumptions
can be used to justify the obligations.
For example, the type \miz{morphism of a,b}
can be defined
under the assumption that there exists a morphism from \miz{a} to \miz{b}.
Without the assumption the definition of \miz{morphism of a,b} would not
be allowed since the type would not be provably inhabited (see~\cite{CKKP-CICM/MKM17,JAR2018}).

In contrast to Fraenkel terms,
permissive definitions do not have an obvious semantic justification in FOTG.
For any type \miz{$\Theta$'of'a,b,...} (depending on objects $a,b,\ldots$)
a permissive definition can be used to obtain a choice operator for
the type in the following way:\vspace{-0.05cm}%
\begin{lstlisting}
  definition
    let'a,b,...'such'that'C:'contradiction;
    func'choose(a,b,...)'->'$\Theta$'of'a,b,...'means'contradiction;
    existence'by'C; uniqueness'by'C;
  end;
\end{lstlisting}\vspace{-0.05cm}%
The definition states that given objects $a,b,\ldots$ (of appropriate types),
the function $\miz{choose}$ will return an object of type
\miz{$\Theta$'of'a,b,...}
satisfying the condition \miz{contradiction}.
The definition is made under the extra assumption \miz{contradiction}
and it is this extra assumption (from which everything can be proven)
that guarantees existence and uniqueness of an object of type
\miz{$\Theta$'of'a,b,...}
satisfying the otherwise impossible condition.
After the definition is made, Mizar allows the user to make use of
the term
\miz{choose(a,b,...)}
of type
\miz{$\Theta$'of'a,b,...}
even in non-contradictory contexts.

To avoid repetition of definitions like \miz{choose}, in 2012,
the Mizar syntax was extended by the {\it explicit} operator \miz{the}
(e.g., \miz{the'$\Theta$'of'a,b,...'}).
This new operator behaves similarly to a Hilbert $\varepsilon$-operator,
which corresponds to having a global choice operator on the universe of sets
(cf. p. 72 of~\cite{Fraenkel1973}).
ZFC extended with a global choice operator is known to be conservative
over ZFC~\cite{Felgner1971}.
The situation with FOTG is analogous to that of ZFC, and
we conjecture
FOTG extended with a global choice operator (\miz{the}) is
conservative over FOTG.
Regardless of the truth of this conjecture,
we take the proper foundation of the MML to be
FOTG extended with a global choice operator (see~\cite{JAR2018}).

\section{Egal and HOTG}\label{sec:egal}

Egal~\cite{EgalManual} is a proof checker for
higher-order Tarski-Grothendieck (HOTG) set theory.
Since this is the first publication describing Egal,
we begin by placing the system in context and
discussing various design decisions.

The idea of combining higher-order logic and set theory
is not new~\cite{Gordon1996,KirstSmolka:2018:Categoricity,ObuaHOLZF}.
However, many of the features of existing higher-order systems
(e.g., the ability to define new type constructors such as
$\alpha\times\beta$)
should in principle no longer be needed if one is doing
higher-order set theory.
Instead the higher-order logic only needs to be expressive
enough to bootstrap the set theory.
Once enough set theory has been developed
users would work with products of sets
(instead of products of types).
With this in mind, Egal begins with a ``higher-order logic''
restricted to
a simple type theory mostly in the style of Church~\cite{Church40},
extended with limited prefix polymorphism (discussed below).

Another motivation to use as restricted a form of higher-order
logic as possible is to ensure Egal satisfies the
de Bruijn criterion~\cite{BarendregtWiedijk05}:
Egal proofs should be checkable by independent small proof checkers.
For this reason
Egal places an emphasis on proof terms and proof checking.
Proof terms are $\lambda$-calculus terms corresponding
to natural deduction proofs in the Curry-Howard sense.
Egal proof scripts are presented in a way
similar to Coq~\cite{Bertot08Coq}
and instruct Egal how to construct a proof term.
Since the underlying logic is relatively simple
and the additional set theory axioms are few,
the portion of the code that does type checking and proof checking
is reasonably short.
Of course the Egal code consists of more than just a checker.
For example, the code includes
a parser allowing users to give terms using mathematical
notation and variable names (instead of the de Bruijn indices used
internally)
as well as an interpreter for proof script steps.
Nevertheless we claim Egal satisfies the de Bruijn criterion
in the sense that a small independent checker could easily be written to
take as input
serialized versions of the internal representations of Egal types, terms and proof terms
and check correctness of a sequence of definitions and proofs.
The de Bruijn criterion also provides a major point of contrast between
Egal and Mizar, as constructing an independent checker for Mizar proofs
would be nontrivial for several reasons (e.g., the soft typing system).


The kernel of the Egal system includes simply typed $\lambda$-calculus
with a type of propositions
along with a $\lambda$-calculus for proof terms.
There is a base type of individuals $\iota$ (thought of as sets),
a based type of propositions $o$ and function types $\sigma\to \tau$.
Egal also allows the use of type variables
for some purposes (e.g., defining equality or giving axioms such as functional
extensionality). To simplify the presentation, we will assume
there are no type variables at first and then briefly describe
how type variables are treated.
Without extra axioms, the logic of Egal is intentional intuitionistic higher-order logic.
On top of this logic we add constants and axioms that yield an extensional classical higher-order
set theory.

To be precise let $\cT$ be the set of types generated freely via
the grammar $o|\iota|\sigma\to\tau$.
We use $\sigma, \tau$ to range over types.
For each $\sigma\in\cT$ let $\cV_\sigma$ be a countably infinite set of variables
and assume $\cV_\sigma\cap\cV_\tau = \emptyset$ whenever $\sigma\not=\tau$.
We use $x,y,z,X,Y,f,g,p,q,P,Q,\ldots$ to range over variables.
For each $\sigma\in\cT$ let $\cC_\sigma$ be a set of constants.
We use $c, c_1, c_2,\ldots$ to range over constants.
We consider only a fixed family of constants given as follows:\vspace{-0.05cm}%
\begin{itemize}
\item $\varepsilon_{\sigma}$ is a constant in $\cC_{(\sigma\to o)\to\sigma}$ for each type $\sigma$.
\item ${\mathsf{In}}$ is a constant in $\cC_{\iota\to\iota\to o}$.
\item ${\mathsf{Empty}}$ is a constant in $\cC_{\iota}$.
\item ${\mathsf{Union}}$ is a constant in $\cC_{\iota\to\iota}$.
\item ${\mathsf{Power}}$ is a constant in $\cC_{\iota\to\iota}$.
\item ${\mathsf{Repl}}$ is a constant in $\cC_{\iota\to (\iota\to\iota)\to \iota}$.
\item ${\mathsf{UnivOf}}$ is a constant in $\cC_{\iota\to\iota}$.
\end{itemize}\vspace{-0.05cm}%
No other constants are allowed. We assume none of these constants are variables.

We next define a family $(\Lambda_\sigma)_{\sigma\in \cT}$ of typed terms as follows.
We use $s$, $t$ and $u$ to range over terms.\vspace{-0.05cm}
\begin{itemize}
\item If $x\in\cV_\sigma$, then $x\in \Lambda_\sigma$.
\item If $c\in\cC_\sigma$, then $c\in \Lambda_\sigma$.
\item If $s\in\Lambda_{\sigma\to\tau}$ and $t\in\Lambda_\sigma$, then $(st)\in\Lambda_\tau$.
\item If $x\in\cV_\sigma$ and $t\in\Lambda_\tau$, then $(\lambda x.t)\in \Lambda_{\sigma\to\tau}$.
\item If $s\in\Lambda_o$ and $t\in\Lambda_o$, then $(s\limplies t)\in\Lambda_o$.
\item If $x\in\cV_\sigma$ and $t\in\Lambda_o$, then $(\forall x.t)\in \Lambda_o$.
\end{itemize}\vspace{-0.05cm}
Each member of $\Lambda_\sigma$ is a {\emph{term of type $\sigma$}}.
Terms of type $o$ are also called {\emph{propositions}}.
We sometimes use $\varphi$, $\psi$ and $\xi$ to range over propositions.
It is easy to see that $\Lambda_\sigma$ and $\Lambda_\tau$ are disjoint for $\sigma\not=\tau$.
That is, each term has at most one type.

We omit parentheses when possible, with application associating to the left and implication associating to the right: $stu$ means $((st)u)$ and $\varphi\limplies \psi\limplies \xi$ means $(\varphi\limplies (\psi\limplies \xi))$. Binders are often combined: $\lambda x y z.s$ means $\lambda x.\lambda y.\lambda z.s$ and $\forall x y z.\varphi$ means $\forall x.\forall y.\forall z.\varphi$. To present the types of variables concisely,
we sometimes annotate variables in binders with their types,
as in $\lambda x\!:\!\sigma.s$ to assert $x\in\cV_\sigma$.
When the type of a variable is omitted entirely, it is $\iota$.

Although the only logical connectives as part of the definition of terms are implication
and universal quantification, it is well-known how to define the other connectives and quantifiers
in a way that even works in an intuitionistic setting~\cite{Brown2015}.
For this reason we freely write propositions $(\neg\varphi)$,
$(\varphi\land\psi)$,
$(\varphi\lor\psi)$,
$(\varphi\lequiv\psi)$,
$(\exists x.\varphi)$
and $(s = t)$ (for $s,t\in\Lambda_\sigma$).
Again, we omit parentheses and use common binder abbreviations
in obvious ways.

We also use special notations for terms built using the constants.
We write $s\in t$ for ${\mathsf{In}}~s~t$.
We write $\forall x\in s.\varphi$ for $\forall x.x\in s\limplies \varphi$
and $\exists x\in s.\varphi$ for $\exists x.x\in s\land \varphi$.
We write $\varepsilon x:\sigma.\varphi$ for $\varepsilon_{\sigma} (\lambda x:\sigma.\varphi)$
and $\varepsilon x\in s.\varphi$ for $\varepsilon x.x\in s\land \varphi$.
We also write $\emptyset$ for ${\mathsf{Empty}}$,
$\bigcup s$ for ${\mathsf{Union}}~s$, $\wp s$ for ${\mathsf{Power}}~s$,
$\{s|x\in t\}$ for ${\mathsf{Repl}}~t~(\lambda x.s)$
and $\UnivOf{s}$ for ${\mathsf{UnivOf}}~s$.

In general new names can be introduced to abbreviate terms of a given type.
In many cases we introduce new corresponding notations as well.
The following abbreviations are used in the statements of the axioms below:
\begin{itemize}
\item ${\mathsf{TransSet}}:\iota\to o$ is $\lambda U.\forall X\in U. X\subseteq U$. Informally we
  say {\emph{$U$ is transitive}} to mean ${\mathsf{TransSet}}~U$.
\item ${\mathsf{Union\_closed}}:\iota\to o$ is $\lambda U.\forall X\in U.\bigcup X\in U$.
  Informally we say {\emph{$U$ is $\bigcup$-closed}} to mean ${\mathsf{Union\_closed}}~U$.
\item ${\mathsf{Power\_closed}}:\iota\to o$ is $\lambda U.\forall X\in U.\wp X\in U$.
  Informally we say {\emph{$U$ is $\wp$-closed}} to mean ${\mathsf{Power\_closed}}~U$.
\item ${\mathsf{Repl\_closed}}:\iota\to o$ is $\lambda U.\forall X\in U.\forall F:\iota\to\iota.(\forall x\in X.Fx\in U)\limplies \{Fx|x\in X\}\in U$.
  Informally we say {\emph{$U$ is closed under replacement}} to mean ${\mathsf{Repl\_closed}}~U$.
\item ${\mathsf{ZF\_closed}}:\iota\to o$ is $\lambda U.{\mathsf{Union\_closed}}~U\land{\mathsf{Power\_closed}}~U\land{\mathsf{Repl\_closed}}~U$.
  Informally we say {\emph{$U$ is ZF-closed}} to mean ${\mathsf{ZF\_closed}}~U$.
\end{itemize}

The deduction system for Egal includes a set $\cA$ of closed propositions we call axioms.
The specific members of the set $\cA$ are as follows:\vspace{-0.05cm}%
\begin{description}
\item[Prop. Ext.] $\forall P Q:o.(P\lequiv Q) \limplies P = Q$,
\item[Func. Ext.] $\forall f g:\sigma\to\tau.(\forall x:\sigma.f x = g x)\limplies f = g$ (for types $\sigma$ and $\tau$),
\item[Choice] $\forall p:\sigma\to o.\forall x:\sigma.px\limplies p(\varepsilon x:\sigma.p x)$ (for each type $\sigma$),
\item[Set Ext.] $\forall X Y.X\subseteq Y \limplies Y \subseteq X \limplies X = Y$,
\item[$\in$-Induction] $\forall P:\iota\to o.(\forall X.(\forall x\in X.Px)\limplies PX)\limplies \forall X.PX$,
\item[Empty] $\neg\exists x.x\in \emptyset$,
\item[Union] $\forall X x.x\in\bigcup X\lequiv \exists Y.x\in Y \land Y\in X$,
\item[Power] $\forall X Y.Y\in\wp X\lequiv Y \subseteq X$,
\item[Replacement] $\forall X.\forall F:\iota\to\iota.\forall y.y\in \{Fx|x\in X\}\lequiv \exists x\in X.y = F x$,
\item[Universe In] $\forall N.N\in \UnivOf{N}$,
\item[Universe Transitive] $\forall N.{\mathsf{TransSet}}~\UnivOf{N}$,
\item[Universe ZF closed] $\forall N.{\mathsf{ZFclosed}}~\UnivOf{N}$ and
\item[Universe Min] $\forall N U.N\in U\limplies {\mathsf{TransSet}}~U\limplies {\mathsf{ZFclosed}}~U\limplies \UnivOf{N} \subseteq U$.
\end{description}\vspace{-0.05cm}%
The axiom set would be finite if it were not for functional extensionality and choice.
In the implementation type variables are used to specify functional extensionality
and choice. Again, we delay discussion of type variables for the moment.

The notions of free and bound variables are defined as usual,
as is the notion of a variable $x$ being free in a term $s$.
We consider terms equal up to bound variable names.
As usual there are notions of capture-avoiding substitution
and we write $s^x_t$ to be the result of subsituting $t$ for $x$ in $s$.
We have the usual notions of $\beta$-conversion and $\eta$-conversion:
$(\lambda x.s)t$ $\beta$-reduces to $s^x_t$
and $(\lambda x.sx)$ $\eta$-reduces to $s$ if $x$ is not free in $s$.
The relation $s\sim_{\beta\eta} t$ on terms $s,t\in\Lambda_{\sigma}$
is the least congruence relation closed under $\beta$-conversion and $\eta$-conversion.

The underlying deduction system for Egal is natural deduction with proof terms.
We do not discuss proof terms here, but give the corresponding natural deduction calculus without
proof terms in Figure~\ref{egal:nd}.
The calculus defines when $\Gamma\vdash \varphi$ is derivable
where $\Gamma$ is a finite set of propositions and $\varphi$ is a proposition.

\begin{figure}
\vspace{-0.3cm}
  \begin{center}
    \begin{mathpar}
      \inferrule*[left={Ax}]{\varphi\in\cA}{\Gamma\vdash \varphi}
      \and
      \inferrule*[left={Hyp}]{\varphi\in\Gamma}{\Gamma\vdash \varphi}
      \and
      \inferrule*[left=$\beta$]{\Gamma\vdash \psi \\ \psi \sim_{\beta\eta} \varphi}{\Gamma \vdash \varphi}
      \and
      \inferrule*[left={$\limplies$I}]{\Gamma\cup\{\varphi\} \vdash \psi}{\Gamma\vdash \varphi\limplies \psi}
      \and
      \inferrule*[left={$\limplies$E}]{\Gamma\vdash \varphi\limplies \psi \\ \Gamma\vdash \varphi}{\Gamma \vdash \psi}
      \and
      \inferrule*[left={$\forall$I}]{\Gamma \vdash\varphi^x_y \\ y\in\cV_\sigma {\mbox{ is not free in }} \Gamma\cup\{\varphi\}}{\Gamma\vdash \forall x:\sigma.\varphi}
      \and
      \inferrule*[left={$\forall$E}]{\Gamma \vdash \forall x:\sigma.\varphi\\ t \in\Lambda_\sigma}{\Gamma\vdash \varphi^x_t}
    \end{mathpar}
  \end{center}
  \vspace{-0.2cm}
  \caption{Natural deduction sytem}\label{egal:nd}
\end{figure}

We now briefly discuss the role of polymorphism in Egal.
We have already seen examples where type variables would
be useful. Instead of having infinitely many constants $\varepsilon_\sigma$
in the implementation there is one constant $\varepsilon$
which must be associated with a type when used.
Likewise, the axioms of functional extensionality and choice
make use of type variables and whenever these axioms are used
the instantiations for these type variables must be given.
Some definitions (such as equality, existential quantification and if-then-else)
as well as some theorems (such as the existential introduction rule)
also make use of type variables.
From the beginning Egal was designed to discourage the use of type variables
in the hope of eventually eliminating them.
For this reason constants, definitions, axioms and theorems can use at most
three type variables.
To make this precise we have
three fixed type variables $\nu_0$, $\nu_1$ and $\nu_2$.
For $n\in\{0,1,2,3\}$ we have $\cT^n$ as the set of types freely generated from
$\nu_0|\cdots|\nu_{n-1}|o|\iota|\sigma\to\tau$.
Similarly we have four families of terms $(\Lambda^n_\sigma)_{\sigma\in \cT^n}$
and four judgments $\Gamma\vdash_n\varphi$
where $\Gamma$ is a finite subset of $\Lambda^n_o$
and $\varphi$ is in $\Lambda^n_o$.
All definitions and theorems (with proofs) are given in some type context
determined by $n\in\{0,1,2,3\}$. The context remains fixed throughout
the declaration. If $n > 0$, then when the definition or theorem is used
later (in type context $m\in\{0,1,2,3\}$) it must be given along with $n$ (explicitly given) types
from $\cT^m$ which are used to instantiate the type variables.


In addition to the constants and axioms of the system,
we import a number of constructions and results from the library distributed with Egal.
Some of the constructions are definitions of logical connectives, equality and existential quantification
as well as basic theorems about their properties. Negation of equality, negation of set membership
and subset are imported, defined in the obvious ways. We use the notation $s\not=t$, $s\not\in t$ and $s\subseteq t$
for the corresponding propositions.
The definitions ${\mathsf{TransSet}}$,
${\mathsf{Union\_closed}}$,
${\mathsf{Power\_closed}}$,
${\mathsf{Repl\_closed}}$
and ${\mathsf{ZF\_closed}}$ are imported.
In addition the following definitions are imported:\vspace{-0.05cm}%
\begin{itemize}
\item ${\mathsf{ordinal}}:\iota\to o$ is $\lambda \alpha.{\mathsf{TransSet}}~\alpha\land \forall \beta\in\alpha.{\mathsf{TransSet}}~\beta$. Informally we say {\emph{$\beta$ is an ordinal}} to mean ${\mathsf{ordinal}}~\beta$.
\item ${\mathsf{famunion}}:\iota\to (\iota\to\iota)\to \iota$ is $\lambda X F.\bigcup \{Fx|x\in X\}$. We write $\bigcup_{x\in s} t$ for ${\mathsf{famunion}}~s~(\lambda x.t)$.
\end{itemize}\vspace{-0.05cm}%
We also import the following objects in an opaque way, so that we will only be able to use properties imported
from the library and not the actual definitions.\vspace{-0.05cm}
\begin{itemize}
\item ${\mathsf{Sep}}:\iota\to(\iota\to o)\to \iota$. We write $\{x\in X|\varphi\}$ for ${\mathsf{Sep}}~X~(\lambda x.\varphi)$.
  Results are imported to ensure $\forall z.z\in\{x\in X|\varphi\}\lequiv z\in X\land \varphi^x_z$ is provable.
\item ${\mathsf{ReplSep}}:\iota\to(\iota\to o)\to (\iota\to\iota)\to \iota$. We write $\{s|x\in X {\mathrm{~such~that~}}\varphi\}$ for ${\mathsf{ReplSep}}~X~(\lambda x.\varphi)~(\lambda x.s)$.
  Results are imported to ensure the provability of
  $\forall z.z\in\{s|x\in X{\mathrm{~such~that~}}\varphi\}\lequiv \exists y\in X. \varphi^x_y \land z = s^x_y$.
\item ${\mathsf{UPair}}:\iota\to\iota\to\iota$. We write $\{x,y\}$ for ${\mathsf{UPair}}~x~y$.
  Results are imported to ensure $\forall z.z\in\{x,y\}\lequiv z = x\lor z = y$ is provable.
\item ${\mathsf{Sing}}:\iota\to\iota$. We write $\{x\}$ for ${\mathsf{Sing}}~x$.
  Results are imported to ensure $\forall z.z\in\{x\}\lequiv z = x$ is provable.
\item ${\Inrec}:(\iota\to(\iota\to\iota)\to\iota)\to\iota\to\iota$.
  The ${\Inrec}$ operator is used to define functions by $\in$-recursion over the universe.
  Given a function $F:\iota\to (\iota\to\iota)\to \iota$ satisfying certain conditions,
  $\Inrec~F$ yields a function $f$ satisfying $f~X~=~F~X~f$.
  Its construction is discussed in detail in~\cite{Brown2015}. It is obtained by
  defining the graph of $\Inrec~F$ as the least relation satisfying appropriate closure
  properties and then using $\in$-induction to prove (under appropriate assumptions)
  that this yields a functional relation.
  Here we will only
  need the fundamental property imported as Proposition~\ref{prop:inreceq} below.
  Its use will be essential in proving Tarski's Axiom A in Section~\ref{sec:axioma}.
\end{itemize}\vspace{-0.05cm}
We will freely make use of these imported terms to form new terms below.

Less than 60 results proven in the library need to be imported
in order to prove the results discussed in this paper.
Most of those results are basic results about logic and set theory
and we will leave them implicit here.
The choice axiom and the extensionality axioms
make the logic extensional and classical~\cite{Diaconescu75}.
We import excluded middle and the double negation law
from the library.

The following imported results are worth making explicit:
\begin{proposition}\label{prop:Inirref}  $\forall x.x\notin x$.
\end{proposition}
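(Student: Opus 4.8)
The plan is to derive this directly from the $\in$-Induction axiom, which here plays the role of the foundation/regularity axiom. I would instantiate $\in$-Induction with the predicate $P := \lambda X.\,X\notin X$. That reduces the goal $\forall x.\,x\notin x$ to the single obligation
\[
\forall X.\,(\forall x\in X.\,x\notin x)\limplies X\notin X .
\]

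To discharge this obligation, fix an arbitrary $X$ and assume the induction hypothesis $\forall x\in X.\,x\notin x$, which by the abbreviation conventions is $\forall x.\,x\in X\limplies x\notin x$. The goal $X\notin X$ unfolds to $X\in X\limplies\bot$, so assume $X\in X$ and derive a contradiction. Instantiating the induction hypothesis at $x := X$ gives $X\in X\limplies X\notin X$; applying it to the assumption $X\in X$ yields $X\notin X$, i.e.\ $X\in X\limplies\bot$, and one more application to $X\in X$ produces $\bot$. Hence $X\notin X$, the obligation is met, and $\in$-Induction delivers $\forall X.\,X\notin X$, which is the claim up to renaming. Concretely, the Egal proof term is just $\in$-Induction applied to the abstraction $\lambda X.\,X\notin X$ together with the short proof term $\lambda X\,H\,A.\,H\,X\,A\,A$.

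There is essentially no obstacle: this is the textbook argument that $\in$ is irreflexive under foundation, and it is moreover intuitionistically valid, so none of the choice or extensionality axioms are invoked. The only place requiring a little care is the purely definitional bookkeeping — recognising that $X\notin X$ abbreviates $X\in X\limplies\bot$ and that $\forall x\in X.\,\varphi$ abbreviates $\forall x.\,x\in X\limplies\varphi$ — after which the natural-deduction steps ($\forall$E on the axiom and on the hypothesis, $\limplies$E twice, $\limplies$I, $\forall$I) are immediate.
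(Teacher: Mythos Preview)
Your argument is correct and is exactly the standard derivation of $\in$-irreflexivity from the $\in$-Induction axiom; the instantiation $P := \lambda X.\,X\notin X$ together with the short term $\lambda X\,H\,A.\,H\,X\,A\,A$ is precisely what one expects in this setting, and you are right that it is intuitionistically valid and needs neither choice nor extensionality.

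There is, however, nothing to compare against in the paper itself: Proposition~\ref{prop:Inirref} is not proved in the paper but merely listed among the results \emph{imported} from the Egal library (see the passage ``The following imported results are worth making explicit'' preceding the proposition). So the paper offers no proof of its own here. Your proof is almost certainly what the underlying library proof looks like, since $\in$-Induction is the only foundation-style axiom available in Egal's axiom set and the argument you give is the canonical one.
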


\begin{proposition}[Regularity]\label{prop:Regularity} $\forall X x.x\in X \limplies \exists Y\in X.\neg\exists z\in X.z \in Y$.
\end{proposition}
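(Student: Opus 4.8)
The plan is to derive Regularity directly from the $\in$-Induction axiom, with a single appeal to excluded middle (imported from the library) for a case split. The one real design decision is the choice of induction predicate: instead of inducting on the set $X$ appearing in the statement, I would generalize $X$ and run the induction over the membership witness. Concretely, set $P:\iota\to o$ to be $\lambda x.\forall X.\,x\in X\limplies \exists Y\in X.\,\neg\exists z\in X.\,z\in Y$ and apply $\in$-Induction to this $P$. Since the conclusion of $\in$-Induction is $\forall x.\,Px$, once the inductive step is discharged we obtain exactly the theorem after reordering the two outermost universal quantifiers and unfolding $P$.

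For the inductive step, fix $x$, assume the induction hypothesis $\forall y\in x.\,Py$, and prove $Px$. So fix an $X$ with $x\in X$; we must produce an $\in$-minimal element of $X$. Here I would do a classical case split on the proposition $\neg\exists z\in X.\,z\in x$. In the first case this holds, and then $x$ itself is the required witness $Y$, since $x\in X$. In the second case there is some $z$ with $z\in X$ and $z\in x$; applying the induction hypothesis to this $z$ (legitimate because $z\in x$) and then instantiating the resulting statement $Pz$ at the set $X$ (legitimate because $z\in X$) yields $\exists Y\in X.\,\neg\exists z'\in X.\,z'\in Y$, which is precisely what is needed.

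Combining the two cases gives $Px$, hence $\forall x.\,Px$ by $\in$-Induction, and reshuffling quantifiers gives $\forall X x.\,x\in X\limplies \exists Y\in X.\,\neg\exists z\in X.\,z\in Y$. Note that Proposition~\ref{prop:Inirref} is not needed for this argument (indeed it is an easy consequence of Regularity applied to a singleton).

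I do not expect a genuine obstacle here: the proof is routine modulo two points. The first is the need to phrase the induction predicate with $X$ universally quantified and the induction carried on the witness $x$ — a naive attempt that tries to induct on $X$ directly does not go through, since $X$ is the object being minimized over. The second is the essential use of classical reasoning for the dichotomy "$x$ is $\in$-minimal in $X$, or some member of $X$ lies in $x$"; this is available, as excluded middle has been imported.
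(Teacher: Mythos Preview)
Your argument is correct. The induction predicate is well chosen: generalizing over $X$ and inducting on the witness is exactly what makes the step go through, and the classical case split on whether $x$ itself is $\in$-minimal in $X$ is the standard move. The appeal to excluded middle (equivalently, the double negation law) is available as stated, and no other ingredients are needed.

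As for comparison with the paper: there is nothing to compare, since the paper does not prove Proposition~\ref{prop:Regularity}. It is listed among the results \emph{imported} from the Egal library (``The following imported results are worth making explicit''), so no proof is given in the text. Your derivation from $\in$-Induction plus excluded middle is the expected library proof and would serve perfectly well in its place.
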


\begin{proposition}\label{prop:ordinalHered} $\forall \alpha.{\mathsf{ordinal}}~\alpha\limplies \forall \beta\in\alpha.{\mathsf{ordinal}}~\beta$.
\end{proposition}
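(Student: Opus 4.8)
The plan is to unfold the definition of ${\mathsf{ordinal}}$ and push everything back to properties of $\alpha$. Fix $\alpha$, assume ${\mathsf{ordinal}}~\alpha$, and fix $\beta\in\alpha$; I must show ${\mathsf{ordinal}}~\beta$, i.e.\ ${\mathsf{TransSet}}~\beta$ together with $\forall\gamma\in\beta.{\mathsf{TransSet}}~\gamma$. From the assumption ${\mathsf{ordinal}}~\alpha$ I extract two facts: (i) ${\mathsf{TransSet}}~\alpha$, and (ii) $\forall\gamma\in\alpha.{\mathsf{TransSet}}~\gamma$.

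For the first conjunct, ${\mathsf{TransSet}}~\beta$ follows immediately by instantiating (ii) with $\beta$, using the hypothesis $\beta\in\alpha$. For the second conjunct, fix $\gamma\in\beta$. The key step is to move $\gamma$ up into $\alpha$: from (i), ${\mathsf{TransSet}}~\alpha$ means $\forall X\in\alpha.X\subseteq\alpha$, so instantiating with $\beta\in\alpha$ gives $\beta\subseteq\alpha$, and hence $\gamma\in\alpha$. Now (ii) instantiated with $\gamma$ yields ${\mathsf{TransSet}}~\gamma$, as required.

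There is essentially no obstacle here: the whole argument is definition unfolding plus one use of transitivity of $\alpha$ to promote $\gamma\in\beta$ to $\gamma\in\alpha$. No induction (in particular no $\in$-Induction) is needed, and none of the set-theoretic axioms beyond the definitions are invoked; it is a purely logical manipulation of the two conjuncts packaged in ${\mathsf{ordinal}}~\alpha$. If anything deserves care, it is only the bookkeeping of the bounded quantifiers $\forall\cdot\in\cdot$, which abbreviate $\forall x.x\in s\limplies\cdots$, so each "instantiation" above is really a $\forall$E step followed by an $\limplies$E step discharging the membership side condition.
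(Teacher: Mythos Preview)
Your proof is correct and is exactly the standard argument: unfold ${\mathsf{ordinal}}$, use the second conjunct of ${\mathsf{ordinal}}~\alpha$ to get ${\mathsf{TransSet}}~\beta$, and use transitivity of $\alpha$ to promote $\gamma\in\beta$ to $\gamma\in\alpha$ before applying the second conjunct again. The paper does not actually give a proof of this proposition; it is listed among the results imported from the Egal library and simply taken for granted, so there is nothing to compare against beyond noting that your argument is the expected one.
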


\begin{proposition}\label{prop:ordinalTrichotomyOr} $\forall \alpha \beta.{\mathsf{ordinal}}~\alpha \limplies {\mathsf{ordinal}}~\beta\limplies \alpha\in\beta \lor \alpha = \beta \lor \beta\in\alpha$.
\end{proposition}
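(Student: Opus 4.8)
The plan is to prove the trichotomy by a nested $\in$-induction, the outer one on $\alpha$ and the inner one on $\beta$, using excluded middle for the case splits and Set Ext for the borderline case. No new set-theoretic constructions are needed: the only external facts I would invoke are that elements of ordinals are ordinals (Proposition~\ref{prop:ordinalHered}), the defining transitivity of ordinals, and classical logic (imported from the library). Proposition~\ref{prop:Inirref} is not needed for this route.

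First I would apply the $\in$-Induction axiom with $P\,\alpha$ taken to be ``${\mathsf{ordinal}}~\alpha \limplies \forall \beta.{\mathsf{ordinal}}~\beta\limplies \alpha\in\beta\lor\alpha=\beta\lor\beta\in\alpha$''. Fixing $\alpha$, assuming the outer induction hypothesis $\forall x\in\alpha.P\,x$ and ${\mathsf{ordinal}}~\alpha$, I note that by Proposition~\ref{prop:ordinalHered} every $x\in\alpha$ is an ordinal, so that hypothesis reads: for every $x\in\alpha$ and every ordinal $\beta$ one has $x\in\beta\lor x=\beta\lor\beta\in x$. To discharge the remaining goal I would apply $\in$-Induction a second time, now with $Q\,\beta$ taken to be ``${\mathsf{ordinal}}~\beta\limplies \alpha\in\beta\lor\alpha=\beta\lor\beta\in\alpha$''; fixing $\beta$, I assume the inner hypothesis $\forall y\in\beta.Q\,y$ — which, again by Proposition~\ref{prop:ordinalHered}, says that for every $y\in\beta$ one has $\alpha\in y\lor\alpha=y\lor y\in\alpha$ — and ${\mathsf{ordinal}}~\beta$.

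The heart of the proof is then the claim that either the desired trichotomy already holds or else both $\alpha\subseteq\beta$ and $\beta\subseteq\alpha$. For the first inclusion, take any $\gamma\in\alpha$: the outer hypothesis applied to $\gamma$ and the ordinal $\beta$ yields $\gamma\in\beta$, $\gamma=\beta$, or $\beta\in\gamma$; in the latter two cases $\beta\in\alpha$ (directly when $\gamma=\beta$, and via transitivity of $\alpha$, i.e.\ $\gamma\subseteq\alpha$, when $\beta\in\gamma$), so either we are done or $\gamma\in\beta$, giving $\alpha\subseteq\beta$. Symmetrically, for any $\delta\in\beta$ the inner hypothesis applied to $\delta$ yields $\alpha\in\delta$, $\alpha=\delta$, or $\delta\in\alpha$, and in the first two cases $\alpha\in\beta$ (via transitivity of $\beta$ when $\alpha\in\delta$, directly when $\alpha=\delta$), so either we are done or $\beta\subseteq\alpha$. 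Finally, if neither half produced a disjunct of the conclusion, then $\alpha\subseteq\beta$ and $\beta\subseteq\alpha$, so $\alpha=\beta$ by Set Ext, which is the middle disjunct; hence the trichotomy holds in every case.

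The main obstacle, such as it is, will be recognizing that the inner induction on $\beta$ is genuinely required: a single $\in$-induction on $\alpha$ does not suffice, since the argument needs to compare an element $\delta$ of $\beta$ with $\alpha$, which the $\alpha$-induction hypothesis does not cover. (An alternative is to factor out the lemma ``${\mathsf{ordinal}}~\alpha\limplies{\mathsf{ordinal}}~\beta\limplies\alpha\subseteq\beta\limplies\alpha\neq\beta\limplies\alpha\in\beta$'' and prove it via Regularity applied to the set of elements of $\beta$ not in $\alpha$ — but that proof itself hides an induction, so the nested-induction route above is the most self-contained.) Beyond that point the argument is routine: unfolding ${\mathsf{ordinal}}$ and ${\mathsf{TransSet}}$, and propositional case analysis justified by excluded middle.
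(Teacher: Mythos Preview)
Your argument is correct and is the standard route to ordinal trichotomy: a nested $\in$-induction, with the outer hypothesis handling elements of $\alpha$ and the inner one handling elements of $\beta$, falling back on Set Ext when neither side produces a strict membership. The use of excluded middle to pass from ``for each $\gamma\in\alpha$, either a disjunct of the goal holds or $\gamma\in\beta$'' to ``either a disjunct holds or $\alpha\subseteq\beta$'' is exactly where classicality enters, and the paper explicitly imports excluded middle, so that step is licensed.

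There is nothing to compare against, however: the paper does not prove Proposition~\ref{prop:ordinalTrichotomyOr} at all. It is listed among the results \emph{imported} from the Egal library (``The following imported results are worth making explicit''), alongside Propositions~\ref{prop:Inirref}--\ref{prop:ordinalHered} and~\ref{prop:inreceq}, and no proof sketch is offered in the text. So your proposal stands on its own as a valid proof, and the only remark is that your closing paragraph slightly oversells the difficulty: the nested induction is entirely standard for this result, and anyone formalizing ordinals in a foundation with $\in$-induction would reach for it immediately.
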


The fundamental property of $\Inrec$ is imported from the library:
\begin{proposition}[cf. Theorem 1 in~\cite{Brown2015}]\label{prop:inreceq}\vspace{-0.05cm}%
  $$
  \begin{array}{c}
    \forall \Phi:\iota\to(\iota\to\iota)\to\iota.
    (\forall X.\forall g h:\iota\to\iota.(\forall x \in X. g x = h x)\limplies \Phi~X~g = \Phi~X~h) \\
    \to \forall X.\Inrec~\Phi~X = \Phi~X~(\Inrec~\Phi)
  \end{array}
  $$\vspace{-0.05cm}%
\end{proposition}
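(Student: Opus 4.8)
The plan is to construct $\Inrec$ along the lines sketched before the statement (and carried out in detail in~\cite{Brown2015}): realize the intended recursive function as a choice function through a ``graph'' relation defined as a least fixed point, and then verify the recursion equation using $\in$-Induction. Fix $\Phi:\iota\to(\iota\to\iota)\to\iota$. Call a relation $S:\iota\to\iota\to o$ \emph{$\Phi$-closed}, written $\mathsf{Cl}_\Phi(S)$, when $\forall Y.\forall h:\iota\to\iota.\,(\forall x\in Y.\,S~x~(h\,x))\limplies S~Y~(\Phi~Y~h)$, and define the graph relation $G:\iota\to\iota\to o$ impredicatively by $G~X~y := \forall S:\iota\to\iota\to o.\ \mathsf{Cl}_\Phi(S)\limplies S~X~y$. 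Two facts follow by routine unfolding: $\mathsf{Cl}_\Phi(G)$ holds (fix a $\Phi$-closed $S$; from $G~x~(h\,x)$ for $x\in X$ obtain $S~x~(h\,x)$, then apply $\mathsf{Cl}_\Phi(S)$), and $G$ is the \emph{least} $\Phi$-closed relation, i.e. $\mathsf{Cl}_\Phi(S)$ implies that $G~X~y\limplies S~X~y$.

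Next I would establish \emph{totality}, $\forall X.\exists y.\,G~X~y$, by $\in$-Induction on $X$: the induction hypothesis supplies, for each $x\in X$, a value $y$ with $G~x~y$; the Choice axiom packages these into a function $h:\iota\to\iota$ with $\forall x\in X.\,G~x~(h\,x)$, and $\mathsf{Cl}_\Phi(G)$ then yields $G~X~(\Phi~X~h)$, so $\Phi~X~h$ is the required witness.

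The step I expect to be the main obstacle is \emph{functionality}, $\forall X\,y\,y'.\,G~X~y\limplies G~X~y'\limplies y=y'$; this is the only place the hypothesis $\forall X\,g\,h.(\forall x\in X.\,g\,x=h\,x)\limplies\Phi~X~g=\Phi~X~h$ is used. I would prove by $\in$-Induction that every $X$ has a \emph{unique} $G$-image. Assuming this for all $x\in X$, put $h:=\lambda x.\,\varepsilon y.\,G~x~y$; by totality and Choice, $G~x~(h\,x)$ holds for all $x$, and for $x\in X$ the value $h\,x$ is \emph{the} $G$-image of $x$. Given any $y$ with $G~X~y$, apply leastness of $G$ to the auxiliary relation $S:=\lambda a\,b.\ G~a~b\land(a=X\limplies b=\Phi~X~h)$. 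To check $\mathsf{Cl}_\Phi(S)$: for $Y\neq X$ the second conjunct is vacuous, so $S~Y~(\Phi~Y~k)$ reduces to $\mathsf{Cl}_\Phi(G)$; for $Y=X$ one additionally needs $\Phi~X~k=\Phi~X~h$ whenever $\forall x\in X.\,S~x~(k\,x)$, but then each $k\,x$ (for $x\in X$) is a $G$-image of $x$, hence equals $h\,x$ by the induction hypothesis, so $k$ and $h$ agree on $X$ and the compatibility hypothesis gives the equality. Leastness of $G$ then yields $S~X~y$, hence $y=\Phi~X~h$; combined with totality, this is uniqueness at $X$.

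Finally I would define $\Inrec~\Phi:=\lambda X.\,\varepsilon y.\,G~X~y$. By totality and Choice, $G~X~(\Inrec~\Phi~X)$ holds for every $X$; and the function $h=\lambda x.\,\varepsilon y.\,G~x~y$ appearing in the functionality argument is just $\Inrec~\Phi$ (up to renaming the bound variable and $\beta$-conversion), so that argument applied to $y:=\Inrec~\Phi~X$ gives $\Inrec~\Phi~X=\Phi~X~(\Inrec~\Phi)$, which is the desired equation. Everything above takes place inside Egal using only $\in$-Induction, the Choice axiom, the extensionality axioms (for comparing functions and sets) and the handful of imported basic facts, so in principle the library import of this proposition could be replaced by this self-contained argument.
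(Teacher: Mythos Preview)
Your proposal is correct and follows precisely the approach the paper sketches (and attributes to~\cite{Brown2015}): realize $\Inrec~\Phi$ via the impredicatively-defined least $\Phi$-closed graph relation, prove totality and then functionality by $\in$-induction (the latter using the extensionality hypothesis on $\Phi$), and extract the function with the choice operator. The paper itself does not give a proof of this proposition---it is imported opaquely from the library---so your write-up is essentially a reconstruction of that library proof rather than an alternative to anything presented in the paper.
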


\section{Higher-order vs. First-order Representations}\label{sec:hofo}

For many concepts we cannot directly compare the formulations in Egal
with those from Mizar since Egal is higher-order. On the other hand,
for the cases of interest in this paper we show we can find
first-order formulations which are provably equivalent in Egal
and have counterparts in Mizar. In particular we will use
this to compare Grothendieck universes in Egal (defined using closure
under replacement) and Grothendieck universes in Mizar (defined using
closure under unions of families of sets).

Tarski's Axiom A (Figure~\ref{fig:mizartarskia}) informally states that
every set is in a Tarski universe.
The most interesting condition in the definition of a Tarski universe
is that every subset of the universe is either a member of the
universe or is equipotent with the universe.
The notion of equipotence of two sets can be represented in different
ways. In first-order one can define when sets $X$ and $Y$ are equipotent as follows:
there is a set $R$ of Kuratowski pairs which essentially encodes
the graph of a bijection from $X$ to $Y$.
In order to state Axiom A in Mizar, one must first define Kuratowski pairs
and then equipotence.
This first-order definition of equipotence can of course be made in Egal
as well.
We omit the details, except to say we easily obtain an
Egal abbreviation ${\mathsf{equip}}$ of type $\iota\to\iota\to o$
with a definition analogous to the definition of equipotence in Mizar.

There is an alternative way to characterize equipotence in Egal without
relying on the set theoretic encoding of pairs and functions.
We simply use functions of type $\iota\to\iota$ given by the underlying simple type theory.

Let ${\mathsf{bij}}:\iota\to\iota\to(\iota\to\iota)\to o$ be\vspace{-0.05cm}
$$
\begin{array}{c}
  \lambda X Y.\lambda f:\iota\to\iota. (\forall u\in X.fu\in Y) \land (\forall u v\in X.fu=fv\limplies u=v) \\
  \land (\forall w \in Y.\exists u\in X.f u =w).
\end{array}
$$\vspace{-0.05cm}
Informally we say {\emph{$f$ is a bijection taking $X$ onto $Y$}} to mean ${\mathsf{bij}}~X~Y~f$.

It is straightforward to prove ${\mathsf{equip}}~X~Y \lequiv \exists f:\iota\to\iota.{\mathsf{bij}}~X~Y~f$ in Egal.
When proving Axiom A in Egal (see Theorem~\ref{thm:tarskia})
we will use $\exists f:\iota\to\iota.{\mathsf{bij}}~X~Y~f$
to represent equipotence. To obtain the first-order formulation Axiom A,
the equivalence of the two formulations of equipotence can be used.


A similar issue arises when considering the notion of being ZF-closed in Mizar.
The definition of ${\mathsf{ZF\_closed}}$
relies on ${\mathsf{Repl\_closed}}$.
${\mathsf{Repl\_closed}}$ relies on the higher-order ${\mathsf{Repl}}$
operator and quantifies over the type $\iota\to\iota$.
An alternative first-order definition of $U$ being ZF-closed is to
say $U$ is $\wp$-closed and $U$ is closed under internal family unions.
The internal family union of a set $I$ and a set $f$
is defined as the set ${\mathsf{famunionintern}}~I~f$
such that $w\in {\mathsf{famunionintern}}~I~f$ if
and only if $\exists i\in I.\exists X.[i,X]\in f\land w\in X$
where $[i,X]$ is the Kuratowski pair $\{\{i\},\{i,X\}\}$.
It is easy to prove such a set exists, in both Egal and Mizar.
Closure of $U$ under internal family unions states that if
$I\in U$, $f$ is a set of Kuratowski pairs representing the
graph of a function from $I$ into $U$,
then ${\mathsf{famunionintern}}~I~f\in U$.

We say $U$ is {\emph{ZF-closed in the FO sense}}
if $U$ is $\wp$-closed and closed under internal family unions.
In Egal it is straightforward to prove
that for transitive sets $U$, $U$ is ZF-closed
if and only if $U$ is ZF-closed in the FO sense.
Grothendieck universes in Egal are transitive
ZF-closed sets.
Grothendieck universes in Mizar are transitive sets
that are ZF-closed in the FO sense.
By the equivalence result, we know these two notions
of Grothendieck universes are equivalent in Egal.

\section{Tarski's Axiom A in Egal}\label{sec:axioma}

We will now describe the HOTG proof of Tarski's Axiom A in Egal.

We begin by using the recursion operator to define an operator returning the set of all sets up to
a given rank:
${\mathsf{V}}:\iota\to\iota$ is $\Inrec (\lambda X v.\bigcup_{x\in X}\wp(v x))$.
  We will write ${\mathbf{V}}_X$ for ${\mathsf{V}}$ applied to $X$.
Using Proposition~\ref{prop:inreceq} it is easy to prove the following:
\begin{theorem}\label{thm:Veq}
  $\forall X.{\mathbf{V}}_X = \bigcup_{x\in X}.\wp({\mathbf{V}}_x)$
\end{theorem}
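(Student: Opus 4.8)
The plan is to apply Proposition~\ref{prop:inreceq} directly to the specific functional $\Phi := \lambda X v.\bigcup_{x\in X}\wp(v x)$ used in the definition of ${\mathsf{V}}$. Proposition~\ref{prop:inreceq} says that for any $\Phi:\iota\to(\iota\to\iota)\to\iota$ that is \emph{extensional in its second argument} on each $X$ (i.e.\ agrees on $g$ and $h$ whenever $g x = h x$ for all $x\in X$), one has $\forall X.\Inrec~\Phi~X = \Phi~X~(\Inrec~\Phi)$. Since ${\mathbf{V}}_X$ is by definition $\Inrec~\Phi~X$ and $\Phi~X~{\mathsf{V}} = \bigcup_{x\in X}\wp({\mathbf{V}}_x)$ after $\beta$-reduction, the theorem follows immediately once the extensionality side condition on $\Phi$ is discharged.

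So the only real work is to verify the hypothesis: for every $X$ and every $g,h:\iota\to\iota$ with $g x = h x$ for all $x\in X$, we must show $\bigcup_{x\in X}\wp(g x) = \bigcup_{x\in X}\wp(h x)$, i.e.\ ${\mathsf{famunion}}~X~(\lambda x.\wp(g x)) = {\mathsf{famunion}}~X~(\lambda x.\wp(h x))$. I would prove this by Set Extensionality: take an arbitrary $w$, unfold the definition of ${\mathsf{famunion}}$ (recall ${\mathsf{famunion}}~X~F = \bigcup\{Fx|x\in X\}$) together with the Union and Replacement axioms to get $w\in\bigcup_{x\in X}\wp(g x)$ iff $\exists x\in X.\,w\in\wp(g x)$, and similarly for $h$. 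Now for any $x\in X$ the assumption gives $g x = h x$, hence $\wp(g x) = \wp(h x)$, so the two existential statements are equivalent, and extensionality closes the argument. In Egal this is a short routine chain of rewriting with the membership characterizations of $\bigcup$, ${\mathsf{Repl}}$, and $\wp$.

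The main obstacle, such as it is, is purely bookkeeping: one must carefully manage the $\beta\eta$-conversions so that $\Phi~X~{\mathsf{V}}$ really reduces to the displayed right-hand side $\bigcup_{x\in X}\wp({\mathbf{V}}_x)$, keeping in mind the notational conventions (${\mathbf{V}}_x$ for ${\mathsf{V}}~x$, the $\bigcup_{x\in s}t$ abbreviation for ${\mathsf{famunion}}$, and so on) and that the rule $\beta$ in Figure~\ref{egal:nd} lets us pass freely between $\beta\eta$-equivalent propositions. There is no genuine mathematical difficulty here; the congruence hypothesis of Proposition~\ref{prop:inreceq} is satisfied simply because $\wp$ and $\bigcup$ are themselves (provably) functions and ${\mathsf{famunion}}$ depends on its indexing family only through its values on the index set. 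Hence the proof is essentially a one-line instantiation of Proposition~\ref{prop:inreceq} plus a two-line verification of its side condition.
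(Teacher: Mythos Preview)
Your proposal is correct and is precisely the approach the paper takes: the paper simply states that Theorem~\ref{thm:Veq} follows easily from Proposition~\ref{prop:inreceq}, and your write-up spells out exactly that instantiation together with the routine verification of the congruence hypothesis on $\Phi=\lambda X v.\bigcup_{x\in X}\wp(v x)$.
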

It is then straightforward to prove a sequence of results.
\begin{theorem}\label{thm:Vfacts} The following facts hold.
 \vspace{-0.05cm}
  \begin{enumerate}
  \item\label{thm:VI} $\forall y x X.x\in X\limplies y\subseteq {\mathbf{V}}_x \limplies y \in {\mathbf{V}}_X$.
  \item\label{thm:VE} $\forall y X.y\in {\mathbf{V}}_X \limplies \exists x\in X. y \subseteq {\mathbf{V}}_x$.
  \item\label{thm:VSubq} $\forall X.X\subseteq {\mathbf{V}}_X$.
  \item\label{thm:VSubq2} $\forall X Y.X\subseteq {\mathbf{V}}_Y \limplies {\mathbf{V}}_X \subseteq {\mathbf{V}}_Y$.
  \item\label{thm:VIn} $\forall X Y.X\in {\mathbf{V}}_Y \limplies {\mathbf{V}}_X \in {\mathbf{V}}_Y$.
  \item\label{thm:VInOrSubq} $\forall X Y.X\in {\mathbf{V}}_Y \lor {\mathbf{V}}_Y \subseteq {\mathbf{V}}_X$.
  \item\label{thm:VInOrSubq2} $\forall X Y.{\mathbf{V}}_X\in {\mathbf{V}}_Y \lor {\mathbf{V}}_Y \subseteq {\mathbf{V}}_X$.
  \end{enumerate}
\end{theorem}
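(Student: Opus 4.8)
The plan is to prove the seven facts in the order stated, leaning on Theorem~\ref{thm:Veq} and the Union, Replacement, Power and $\in$-Induction axioms. First I would record the basic membership characterisation
$$y\in{\mathbf{V}}_X \;\lequiv\; \exists x\in X.\,y\subseteq{\mathbf{V}}_x,$$
which is obtained simply by rewriting ${\mathbf{V}}_X$ as $\bigcup_{x\in X}\wp({\mathbf{V}}_x)$ via Theorem~\ref{thm:Veq} and then unfolding the Union, Replacement and Power axioms in turn. Facts~\ref{thm:VI} and~\ref{thm:VE} are exactly the two directions of this equivalence, so once it is in hand they are immediate.

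For fact~\ref{thm:VSubq} ($X\subseteq{\mathbf{V}}_X$) I would use $\in$-induction on $X$: for $x\in X$ the induction hypothesis gives $x\subseteq{\mathbf{V}}_x$, whence $x\in{\mathbf{V}}_X$ by~\ref{thm:VI}. Fact~\ref{thm:VSubq2} is the first step that needs a strengthened induction: I would prove $\forall Y.\,X\subseteq{\mathbf{V}}_Y\limplies{\mathbf{V}}_X\subseteq{\mathbf{V}}_Y$ by $\in$-induction on $X$, so that the hypothesis is available for every second argument. Given $X\subseteq{\mathbf{V}}_Y$ and $z\in{\mathbf{V}}_X$, apply~\ref{thm:VE} to get $x\in X$ with $z\subseteq{\mathbf{V}}_x$; since $x\in X\subseteq{\mathbf{V}}_Y$, apply~\ref{thm:VE} again to get $y\in Y$ with $x\subseteq{\mathbf{V}}_y$; the induction hypothesis at $x$ (with second argument $y$) yields ${\mathbf{V}}_x\subseteq{\mathbf{V}}_y$, so $z\subseteq{\mathbf{V}}_y$ and then $z\in{\mathbf{V}}_Y$ by~\ref{thm:VI}. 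Fact~\ref{thm:VIn} is then a short chain: $X\in{\mathbf{V}}_Y$ gives (by~\ref{thm:VE}) some $y\in Y$ with $X\subseteq{\mathbf{V}}_y$, hence ${\mathbf{V}}_X\subseteq{\mathbf{V}}_y$ by~\ref{thm:VSubq2}, hence ${\mathbf{V}}_X\in{\mathbf{V}}_Y$ by~\ref{thm:VI}.

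For the linearity facts, fact~\ref{thm:VInOrSubq2} reduces to fact~\ref{thm:VInOrSubq} together with~\ref{thm:VIn} (if $X\in{\mathbf{V}}_Y$ then ${\mathbf{V}}_X\in{\mathbf{V}}_Y$), so the real work is fact~\ref{thm:VInOrSubq}: $X\in{\mathbf{V}}_Y\lor{\mathbf{V}}_Y\subseteq{\mathbf{V}}_X$. I would prove it by $\in$-induction on $X$ with the statement again generalised over $Y$. Fix $Y$ and split (using excluded middle) on whether $X\subseteq{\mathbf{V}}_y$ for some $y\in Y$. If so, $X\in{\mathbf{V}}_Y$ by~\ref{thm:VI}. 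If not, I claim ${\mathbf{V}}_Y\subseteq{\mathbf{V}}_X$: given $z\in{\mathbf{V}}_Y$, get $y\in Y$ with $z\subseteq{\mathbf{V}}_y$ by~\ref{thm:VE}; since $X\not\subseteq{\mathbf{V}}_y$, pick $x_0\in X$ with $x_0\notin{\mathbf{V}}_y$; the induction hypothesis at $x_0$ (second argument $y$) gives $x_0\in{\mathbf{V}}_y\lor{\mathbf{V}}_y\subseteq{\mathbf{V}}_{x_0}$, and the first disjunct is excluded, so ${\mathbf{V}}_y\subseteq{\mathbf{V}}_{x_0}$; hence $z\subseteq{\mathbf{V}}_{x_0}$ with $x_0\in X$, so $z\in{\mathbf{V}}_X$ by~\ref{thm:VI}.

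The main obstacle is choosing the induction hypotheses for facts~\ref{thm:VSubq2} and~\ref{thm:VInOrSubq}: a naive $\in$-induction on $X$ with the bare statement does not go through, because the hypothesis is needed at a second argument ($y$) that is not the one in the conclusion; generalising over $Y$ before inducting fixes this. The only other non-mechanical point is the classical case split in fact~\ref{thm:VInOrSubq}. Everything else is bookkeeping with Theorem~\ref{thm:Veq} and the set-existence axioms.
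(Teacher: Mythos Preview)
Your proposal is correct and matches the paper's proof essentially step for step: Parts~\ref{thm:VI} and~\ref{thm:VE} from Theorem~\ref{thm:Veq} and the basic set axioms, Part~\ref{thm:VSubq} by $\in$-induction with~\ref{thm:VI}, Part~\ref{thm:VSubq2} by $\in$-induction with~\ref{thm:VI} and~\ref{thm:VE}, Part~\ref{thm:VIn} from~\ref{thm:VI}, \ref{thm:VE} and~\ref{thm:VSubq2}, Part~\ref{thm:VInOrSubq} by $\in$-induction with classical reasoning and~\ref{thm:VI}, \ref{thm:VE}, and Part~\ref{thm:VInOrSubq2} from~\ref{thm:VIn} and~\ref{thm:VInOrSubq}. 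The paper's proof is stated more tersely, but your added detail (in particular the need to generalise over the second argument before inducting in Parts~\ref{thm:VSubq2} and~\ref{thm:VInOrSubq}) is exactly what the terse version leaves implicit.
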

\begin{proof}
  Parts~\ref{thm:VI} and~\ref{thm:VE} are easy consequences of Theorem~\ref{thm:Veq} and properties
  of powersets and family unions. Part~\ref{thm:VSubq} follows by $\in$-induction using Part~\ref{thm:VI}.
  Part~\ref{thm:VSubq2} also follows by $\in$-induction using Parts~\ref{thm:VI} and~\ref{thm:VE}.
  Part~\ref{thm:VIn} follows easily from Parts~\ref{thm:VI}, \ref{thm:VE} and~\ref{thm:VSubq2}.
  Part~\ref{thm:VInOrSubq} follows by $\in$-induction using classical reasoning and Parts~\ref{thm:VI} and \ref{thm:VE}.
  Part~\ref{thm:VInOrSubq2} follows from Part~\ref{thm:VIn} and~\ref{thm:VInOrSubq}.
\end{proof}

Let ${\mathsf{V\_closed}}$ of type $\iota\to o$ be
$\lambda U.\forall X\in U.{\mathbf{V}}_X\in U$.
Informally we say $U$ if ${\mathbf{V}}$-closed to mean ${\mathsf{V\_closed}}~U$.
The following theorem is easy to prove by $\in$-induction using Theorem~\ref{thm:Veq}.
\begin{theorem}\label{thm:VUIn}
  If $U$ is transitive and ZF-closed,
  then $U$ is ${\mathbf{V}}$-closed.
\end{theorem}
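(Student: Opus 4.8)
The plan is to prove $\forall X.\, X\in U \limplies {\mathbf{V}}_X\in U$ by $\in$-induction, i.e., by applying the $\in$-Induction axiom with $P$ taken to be $\lambda X.\, X\in U\limplies {\mathbf{V}}_X\in U$. So I fix $X$, assume the induction hypothesis $\forall x\in X.\,(x\in U\limplies {\mathbf{V}}_x\in U)$, assume $X\in U$, and aim to show ${\mathbf{V}}_X\in U$.

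First I would invoke transitivity of $U$: from $X\in U$ we get $X\subseteq U$, so every $x\in X$ satisfies $x\in U$. Combined with the induction hypothesis this gives ${\mathbf{V}}_x\in U$ for every $x\in X$, and then, since $U$ is $\wp$-closed (one of the conjuncts of ${\mathsf{ZF\_closed}}$), $\wp({\mathbf{V}}_x)\in U$ for every $x\in X$. Next I would rewrite the goal using Theorem~\ref{thm:Veq}, which says ${\mathbf{V}}_X = \bigcup_{x\in X}\wp({\mathbf{V}}_x) = \bigcup\{\wp({\mathbf{V}}_x)\mid x\in X\}$. Applying ${\mathsf{Repl\_closed}}$ to $X\in U$ and the function $F := \lambda x.\wp({\mathbf{V}}_x)$ — whose values on members of $X$ lie in $U$ by the previous step — yields $\{\wp({\mathbf{V}}_x)\mid x\in X\}\in U$. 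Finally, since $U$ is $\bigcup$-closed, $\bigcup\{\wp({\mathbf{V}}_x)\mid x\in X\}\in U$, which is exactly ${\mathbf{V}}_X\in U$, completing the induction.

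There is no real obstacle here; the argument uses all three closure conditions packaged in ${\mathsf{ZF\_closed}}$ — closure under $\wp$, under replacement, and under $\bigcup$ — in that order. The only point that deserves attention is that transitivity of $U$ is genuinely needed, to move from $x\in X$ to $x\in U$ before the induction hypothesis can be applied to $x$.
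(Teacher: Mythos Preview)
Your proof is correct and follows exactly the approach the paper indicates: the paper simply says the theorem ``is easy to prove by $\in$-induction using Theorem~\ref{thm:Veq},'' and you have faithfully unpacked that hint, invoking transitivity to feed the inductive hypothesis and then using all three components of ${\mathsf{ZF\_closed}}$ to close up the family union from Theorem~\ref{thm:Veq}.
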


Using the choice operator it is straightforward to
construct the inverse of a bijection taking $X$ onto $Y$
and obtain a bijection taking $Y$ onto $X$.
\begin{theorem}\label{thm:bijinv}
  $\forall X Y.\forall f:\iota\to\iota.{\mathsf{bij}}~X~Y~f\limplies {\mathsf{bij}}~Y~X~(\lambda y.\varepsilon x\in X.fx=y)$.
\end{theorem}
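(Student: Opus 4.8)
The plan is to fix $X$, $Y$, and $f:\iota\to\iota$ with ${\mathsf{bij}}~X~Y~f$, abbreviate $g := \lambda y.\varepsilon x\in X.fx=y$, and then verify the three conjuncts defining ${\mathsf{bij}}~Y~X~g$. The crucial preliminary step I would isolate is the following lemma: for every $w\in Y$ we have $gw\in X$ and $f(gw)=w$. This is where the choice operator does its work. Surjectivity of $f$ (the third conjunct of ${\mathsf{bij}}~X~Y~f$) supplies some $u\in X$ with $fu=w$, so the predicate $\lambda x.x\in X\land fx=w$ is satisfied by $u$. Applying the Choice axiom to this predicate then tells us that $\varepsilon x.(x\in X\land fx=w)$ — which by our notation for $\varepsilon x\in X.\varphi$ is exactly $gw$ — also satisfies $gw\in X$ and $f(gw)=w$.

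Granting this lemma, the remaining steps are short. The first conjunct of ${\mathsf{bij}}~Y~X~g$, namely $\forall w\in Y.gw\in X$, is immediate from the lemma. For injectivity of $g$ on $Y$, suppose $u,v\in Y$ with $gu=gv$; applying $f$ and using the lemma twice gives $u=f(gu)=f(gv)=v$. For surjectivity of $g$ onto $X$, given $w\in X$ I would take $fw$ as the witness: it lies in $Y$ by the first conjunct of ${\mathsf{bij}}~X~Y~f$, and the lemma applied to $fw\in Y$ yields $g(fw)\in X$ with $f(g(fw))=fw$; since both $g(fw)$ and $w$ lie in $X$ and $f$ identifies them, injectivity of $f$ on $X$ (the second conjunct of ${\mathsf{bij}}~X~Y~f$) forces $g(fw)=w$.

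I do not expect a genuine obstacle here; the only point requiring care is the treatment of the $\varepsilon$-term, i.e.\ recognizing that the behaviour of $g$ is obtained only by instantiating the Choice axiom at the predicate $\lambda x.x\in X\land fx=w$ and feeding it the existence witness coming from surjectivity of $f$. Once that single lemma is in place, the argument is a routine unfolding of the definition of ${\mathsf{bij}}$ together with the injectivity and image conditions already assumed, exactly the sort of classical reasoning that Egal's choice and extensionality axioms license.
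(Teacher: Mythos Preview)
Your proposal is correct and matches the paper's approach: the paper simply states that using the choice operator it is straightforward to construct the inverse of a bijection, without spelling out details. Your decomposition via the lemma $\forall w\in Y.\,gw\in X\land f(gw)=w$ (obtained by instantiating Choice at $\lambda x.x\in X\land fx=w$ using surjectivity of $f$) and the subsequent verification of the three conjuncts of ${\mathsf{bij}}~Y~X~g$ is exactly the intended ``straightforward'' argument.
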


We now turn to the most complicated Egal proof.
More than half of the file ending with the proof of
Axiom A is made up of the proof of Lemma~\ref{mainlem:tarskia}.
We outline the proof here
and make some comments about the corresponding formal proof in Egal along the way.
For the full proof see the technical report~\cite{BP2019techreport} or the Egal formalization.


\begin{lemma}\label{mainlem:tarskia}
  Let $U$ be a ZF-closed transitive set
  and $X$ be such that $X\subseteq U$ and $X\not\in U$.
  There is a bijection $f:\iota\to\iota$
  taking $\{\alpha\in U|{\mathsf{ordinal}}~\alpha\}$ onto $X$.
\end{lemma}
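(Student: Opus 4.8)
The plan is to build the bijection $f$ from the ``ordinals of $U$'' onto $X$ by transfinite recursion along the ordinals in $U$, using the choice operator to pick, at each ordinal stage $\alpha\in U$, a ``fresh'' element of $X$ not yet used. Concretely, I would define, by $\in$-recursion via $\Inrec$ (or directly as an instance of a recursion on ordinals derived from it), a function $g$ such that for each ordinal $\alpha$, $g(\alpha) = \varepsilon x\in X.\,x\notin\{g(\beta)\mid \beta\in\alpha,\ {\mathsf{ordinal}}~\beta\}$, the image of the previously chosen values. The candidate bijection is then $f = \lambda\alpha.\,g(\alpha)$ restricted (in the informal sense) to $\{\alpha\in U\mid{\mathsf{ordinal}}~\alpha\}$; to match the required type $\iota\to\iota$ we just take $f$ to be $g$ itself and verify the three ${\mathsf{bij}}$-clauses for $A := \{\alpha\in U\mid{\mathsf{ordinal}}~\alpha\}$ and $X$.

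The key steps, in order, are: (1) Show the recursion is well posed: the defining functional is compatible with $\beta\eta$ and only depends on $g$ through its values on members of the argument, so Proposition~\ref{prop:inreceq} applies and gives the fixpoint equation for $g$. (2) Show that at every ordinal stage $\alpha\in U$ the choice set $X\setminus g[\{\beta\in\alpha\mid{\mathsf{ordinal}}~\beta\}]$ is nonempty, so the $\varepsilon$-choice genuinely lands in $X$; this is where $X\not\in U$ is used, since if the choice set were ever empty we would have exhibited $X$ (or something equipotent to an ordinal of $U$, hence to an element of $U$) as built from data inside the ZF-closed transitive $U$, using Theorem~\ref{thm:VUIn} and the ${\mathbf{V}}$-facts of Theorem~\ref{thm:Vfacts} to conclude $X\in U$, a contradiction. (3) Injectivity: by construction $g(\alpha)$ is chosen outside $\{g(\beta)\mid\beta\in\alpha\}$, so for ordinals $\alpha\neq\beta$, ordinal trichotomy (Proposition~\ref{prop:ordinalTrichotomyOr}) together with hereditariness (Proposition~\ref{prop:ordinalHered}) forces one to be a member of the other, whence $g(\alpha)\neq g(\beta)$. (4) Surjectivity onto $X$: suppose some $w\in X$ is never hit; then at every stage $\alpha\in U$ the choice set still contains $w$, but the more delicate point is that the set of ordinals in $U$ is itself an ordinal (it is transitive and built from $U$ using Separation, and it is an element of $U$ by ZF-closure / $\wp$-closure arguments), and one runs the recursion out to that ordinal to exhaust $X$; more carefully, if $w$ is never chosen, then the map $\alpha\mapsto g(\alpha)$ embeds $A$ into $X\setminus\{w\}$ injectively, and by transitivity of $U$ and Theorem~\ref{thm:VUIn} this would again put $X$ (or an equipotent copy) inside $U$, contradicting $X\not\in U$ — alternatively, by a minimality/regularity argument (Proposition~\ref{prop:Regularity}) pick the $\in$-least unhit $w$ and derive a contradiction from the fixpoint equation.

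The main obstacle I expect is step (4) combined with the bookkeeping in step (2): one needs that $A=\{\alpha\in U\mid{\mathsf{ordinal}}~\alpha\}$ is an ordinal lying in $U$, that the partial images $g[\{\beta\in\alpha\mid{\mathsf{ordinal}}~\beta\}]$ are sets inside $U$ for $\alpha\in A$ (so that their complement in $X$ being empty would make $X$ definable from $U$-internal data), and that ``$X$ is equipotent to a proper subset of $U$, or to an ordinal in $U$, implies $X\in U$'' — this last implication is exactly the kind of fact that uses transitivity, ${\mathbf{V}}$-closedness (Theorems~\ref{thm:VUIn} and~\ref{thm:Vfacts}), and closure under replacement to transport $X$ into $U$ along the bijection and its inverse (Theorem~\ref{thm:bijinv}). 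Getting these rank/size estimates to line up so that the choice set is provably nonempty at each stage, and so that exhaustion of $X$ happens exactly at the stage $A$, is the heart of the argument and accounts for why this is the longest proof in the development.
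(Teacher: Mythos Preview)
Your overall architecture is right---define $f$ by $\in$-recursion via $\Inrec$, pick a fresh element of $X$ at each ordinal stage, get injectivity from trichotomy, and for surjectivity derive ${\boldsymbol{\lambda}}\in U$ (hence ${\boldsymbol{\lambda}}\in{\boldsymbol{\lambda}}$) under the assumption that some $w\in X$ is missed. Step~(2) is fine: if the fresh set is empty at some ordinal $\alpha\in U$, then $X$ is the replacement image of $\alpha$ and lands in $U$.

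The gap is in step~(4), and it is caused by your definition of $g$. You let $g(\alpha)$ be \emph{any} fresh element chosen by $\varepsilon$. With that definition the argument you sketch for surjectivity does not go through: knowing that $g$ injects ${\boldsymbol{\lambda}}$ into $X\setminus\{w\}$ gives you no handle on $g[{\boldsymbol{\lambda}}]$ beyond $g[{\boldsymbol{\lambda}}]\subseteq X\subseteq U$, and since $X\notin U$ you cannot conclude $g[{\boldsymbol{\lambda}}]\in U$, hence you cannot conclude ${\boldsymbol{\lambda}}\in U$. Your ``alternatively'' clause does not help either: picking an $\in$-minimal unhit $w$ via Regularity says nothing about the ranks of the elements that \emph{were} hit, because those were already fixed by arbitrary $\varepsilon$-choices. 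Also, your parenthetical that ${\boldsymbol{\lambda}}\in U$ holds outright ``by ZF-closure / $\wp$-closure arguments'' is false: ${\boldsymbol{\lambda}}$ is never in $U$ (that would immediately give ${\boldsymbol{\lambda}}\in{\boldsymbol{\lambda}}$); it only becomes derivable under the contradictory hypothesis, and only with extra leverage.

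The missing idea, which the paper builds into the recursion, is to choose at each stage a fresh element of \emph{minimal rank}: one sets ${\mathbf{Q}}~\alpha~f~x$ to mean that $x$ is fresh and ${\mathbf{V}}_x\subseteq{\mathbf{V}}_y$ for every fresh $y$, and then ${\mathbf{F}}~\alpha~f=\varepsilon x.\,{\mathbf{Q}}~\alpha~f~x$. Regularity is used exactly here, to show such a rank-minimal fresh element exists. The payoff is that if $w\in X$ is never hit, then $w$ is fresh at every stage, so ${\mathbf{V}}_{{\mathbf{f}}\alpha}\subseteq{\mathbf{V}}_w$ for every $\alpha\in{\boldsymbol{\lambda}}$; hence ${\mathbf{f}}\alpha\subseteq{\mathbf{V}}_w$ and $\{{\mathbf{f}}\alpha\mid\alpha\in{\boldsymbol{\lambda}}\}\subseteq\wp({\mathbf{V}}_w)$. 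Now $w\in U$, so ${\mathbf{V}}_w\in U$ by Theorem~\ref{thm:VUIn}, $\wp\wp({\mathbf{V}}_w)\in U$ by $\wp$-closure, and transitivity of $U$ gives $\{{\mathbf{f}}\alpha\mid\alpha\in{\boldsymbol{\lambda}}\}\in U$. Finally ${\boldsymbol{\lambda}}$ is the replacement image of this set under the inverse ${\mathbf{g}}$ (Theorem~\ref{thm:bijinv}), so ${\boldsymbol{\lambda}}\in U$, the desired contradiction. Without baking rank-minimality into the recursion you have no substitute for this bound.
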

\begin{proof}
  In the Egal proof we begin by introducing the local names $U$ and $X$ and making
  the corresponding assumptions.
\begin{verbatim}
let U. assume HT: TransSet U. assume HZ: ZF_closed U.
let X. assume HXsU: X c= U. assume HXniU: X /:e U.
\end{verbatim}
  We next make six local abbreviations. Let
  \begin{itemize}
  \item ${\boldsymbol{\lambda}}$ be $\{\alpha\in U|{\mathsf{ordinal}}~\alpha\}$,
  \item ${\mathbf{P}}:\iota\to\iota\to(\iota\to\iota)\to o$
    be $\lambda \alpha x f.x\in X \land \forall \beta\in\alpha.f\beta\not= x$,
  \item ${\mathbf{Q}}:\iota\to(\iota\to\iota)\to\iota\to o$
    be
    $\lambda \alpha f x.{\mathbf{P}}~\alpha~x~f\land \forall y.{\mathbf{P}}~\alpha~y~f\limplies {\mathbf{V}}_x\subseteq {\mathbf{V}}_y$,
  \item ${\mathbf{F}}:\iota\to(\iota\to\iota)\to \iota$
    be
    $\lambda \alpha f.\varepsilon x.{\mathbf{Q}}~\alpha f x$,
  \item ${\mathbf{f}}:\iota\to\iota$ be $\Inrec{{\mathbf{F}}}$ and
  \item ${\mathbf{g}}:\iota\to\iota$ be $\lambda y.\varepsilon \alpha\in{\boldsymbol{\lambda}}.{\mathbf{f}}\alpha = y$.
  \end{itemize}
  In the Egal proof three of these local definitions are given as follows:
 \vspace{-0.05cm}
 \begin{verbatim}
set lambda : set := {alpha :e U|ordinal alpha}.
...
set f : set->set := In_rec F.
set g : set->set := fun y => some alpha :e lambda, f alpha = y.
\end{verbatim}
 \vspace{-0.05cm}
The following claims are then proven:
\begin{center}
\vspace{-0.25cm}
  \begin{tabular}{cc}
    \parbox{5.5cm}{
  \begin{eqnarray}
    \forall \alpha.{\mathbf{f}}\alpha = {\mathbf{F}}~\alpha~{\mathbf{f}}
    \label{eqn:Lfeq} \\
    \forall \alpha\in{\boldsymbol{\lambda}} . {\mathbf{Q}}~\alpha~{\mathbf{f}}~({\mathbf{f}}\alpha)
    \label{eqn:Lone} \\
    \forall \alpha\in{\boldsymbol{\lambda}} . {\mathbf{f}}\alpha\in X
    \label{eqn:Ltwo} \\
    \forall \alpha \beta\in{\boldsymbol{\lambda}} . {\mathbf{f}}\alpha =  {\mathbf{f}}\beta \limplies \alpha = \beta
    \label{eqn:Lthree}
  \end{eqnarray}
  }
  &
    \parbox{5.5cm}{
  \begin{eqnarray}
    {\mathsf{bij}}~\{{\mathbf{f}}~\alpha|\alpha\in{\boldsymbol{\lambda}}\}~{\boldsymbol{\lambda}}~{\mathbf{g}}
    \label{eqn:Lgbij} \\
    {\boldsymbol{\lambda}} = \{{\mathbf{g}}~y|y\in \{{\mathbf{f}}~\alpha|\alpha\in{\boldsymbol{\lambda}}\}\}
    \label{eqn:Lglam} \\
    \forall x\in X.\exists \alpha\in{\boldsymbol{\lambda}}.{\mathbf{f}} \alpha = x
    \label{eqn:Lfour}
  \end{eqnarray}
    }
  \end{tabular}
  \vspace{-0.2cm}
  \end{center}
  Note that (\ref{eqn:Ltwo}),
  (\ref{eqn:Lthree}) and (\ref{eqn:Lfour})
  imply ${\mathbf{f}}$ is a bijection taking ${\boldsymbol{\lambda}}$ onto $X$,
  which will complete the proof.
  Here we only describe the proof of (\ref{eqn:Lone})
  in some detail and make brief remarks
  about the proofs of the other cases.
  For example, Proposition~\ref{prop:inreceq} is used to prove (\ref{eqn:Lfeq}).

  In the Egal proof we express (\ref{eqn:Lone}) as a claim followed by its subproof.
 \vspace{-0.05cm}
 \begin{verbatim}
claim L1: forall alpha :e lambda, Q alpha f (f alpha).
\end{verbatim}
 \vspace{-0.05cm}
  The subproof is by $\in$-induction.
  Let $\alpha$ be given and assume as inductive hypothesis
  $\forall \gamma.\gamma\in\alpha\limplies \gamma\in{\boldsymbol{\lambda}} \limplies {\mathbf{Q}}~\gamma~{\mathbf{f}}~({\mathbf{f}}\gamma)$.
  Assume $\alpha\in{\boldsymbol{\lambda}}$, i.e., $\alpha \in U$ and ${\mathsf{ordinal}}~\alpha$.
  Under these assumptions
  we can prove the following subclaims:
  \begin{center}
  \vspace{-0.25cm}
    \begin{tabular}{ccc}
      \parbox{4.2cm}{
  \begin{eqnarray}
    \forall \beta \in \alpha. {\mathbf{Q}}~\beta~{\mathbf{f}}~({\mathbf{f}} \beta)
    \label{eqn:Lonea} \\
    \forall \beta \in \alpha. {\mathbf{f}} \beta \in X
    \label{eqn:Loneaone} 
  \end{eqnarray}
  }
  &
      \parbox{3.8cm}{
  \begin{eqnarray}
    \{{\mathbf{f}}\beta | \beta\in\alpha\} \subseteq X
    \label{eqn:Lonebaa} \\
    \{{\mathbf{f}}\beta | \beta\in\alpha\} \in U
    \label{eqn:Lonebb} \\
    \exists x. {\mathbf{P}}~\alpha~x~{\mathbf{f}}
    \label{eqn:Loneb} 
  \end{eqnarray}
  }
  &
      \parbox{3.5cm}{
  \begin{eqnarray}
    \exists x. {\mathbf{Q}}~\alpha~{\mathbf{f}}~x
    \label{eqn:Lonec} \\
    {\mathbf{Q}}~\alpha~{\mathbf{f}}~({\mathbf{F}}~\alpha~{\mathbf{f}})
    \label{eqn:Loned}
  \end{eqnarray}
  }
    \end{tabular}
    \vspace{-0.2cm}
  \end{center}
  We show only the proof of ({\ref{eqn:Lonec}}) assuming
  we have already established ({\ref{eqn:Loneb}}).
  Let ${\mathbf{Y}}$ be
  $\{{\mathbf{V}}_x |x\in X {\mathrm{~such~that~}} \forall \beta\in \alpha.{\mathbf{f}}\beta\not=x\}$.
  By ({\ref{eqn:Loneb}}) there is a $w$
  such that ${\mathbf{P}}~\alpha~w~f$.
  That is, $w\in X$ and $\forall \beta\in\alpha.{\mathbf{f}}\beta\not= w$.
  Clearly ${\mathbf{V}}_w\in {\mathbf{Y}}$.
  By Regularity (Proposition~\ref{prop:Regularity})
  there is some $Z\in {\mathbf{Y}}$ such that
  $\neg\exists z \in {\mathbf{Y}}.z\in Z$.
  Since $Z\in {\mathbf{Y}}$ there must be some $x\in X$
  such that $Z = {\mathbf{V}}_x$ and $\forall \beta\in\alpha.{\mathbf{f}}\beta\not=x$.
  We will prove ${\mathbf{Q}}~\alpha~{\mathbf{f}}~x$ for this $x$.
  We know ${\mathbf{P}}~\alpha~x~{\mathbf{f}}$ since
  $x\in X$ and $\forall \beta\in\alpha.{\mathbf{f}}\beta\not=x$.
  It remains only to prove
  $\forall y.{\mathbf{P}}~\alpha~y~{\mathbf{f}}\limplies {\mathbf{V}}_x\subseteq {\mathbf{V}}_y$.
  Let $y$ such that ${\mathbf{P}}~\alpha~y~{\mathbf{f}}$ be given.
  By Theorem~\ref{thm:Vfacts}:\ref{thm:VInOrSubq2} either ${\mathbf{V}}_y\in {\mathbf{V}}_x$
  or ${\mathbf{V}}_x\subseteq {\mathbf{V}}_y$.
  It suffices to prove ${\mathbf{V}}_y\in {\mathbf{V}}_x$ yields a contradiction.
  We know ${\mathbf{V}}_y \in {\mathbf{Y}}$ since ${\mathbf{P}}~\alpha~y~{\mathbf{f}}$.
  If ${\mathbf{V}}_y\in {\mathbf{V}}_x$, then
  ${\mathbf{V}}_y \in Z$ (since $Z = {\mathbf{V}}_x$),
  contradicting
  $\neg\exists z \in {\mathbf{Y}}.z\in Z$.

  We conclude ({\ref{eqn:Loned}}) by ({\ref{eqn:Lonec}}) and the property
  of the choice operator used in the definition of ${\mathbf{F}}$.
  By ({\ref{eqn:Loned}}) and ({\ref{eqn:Lfeq}}) we have
  ${\mathbf{Q}}~\alpha~{\mathbf{f}}~({\mathbf{f}}\alpha)$.
  Recall that this was proven under an inductive hypothesis for $\alpha$.
  We now discharge this inductive hypothesis and conclude ({\ref{eqn:Lone}}).

  One can easily prove (\ref{eqn:Ltwo}) and (\ref{eqn:Lthree}) from (\ref{eqn:Lone})
  and Proposition~\ref{prop:ordinalTrichotomyOr}.
  From (\ref{eqn:Lthree}) and Theorem~\ref{thm:bijinv} we have (\ref{eqn:Lgbij})
  and from this we obtain (\ref{eqn:Lglam}).

  Finally to prove (\ref{eqn:Lfour}) assume there is
  some $x\in X$ such that
  $\neg\exists\alpha \in {\boldsymbol{\lambda}}. {\mathbf{f}}\alpha = x$.
  Under this assumption one can prove ${\boldsymbol{\lambda}} \in {\boldsymbol{\lambda}}$, contradicting Proposition~\ref{prop:Inirref}.
  It is easy to prove ${\boldsymbol{\lambda}}$ is an ordinal,
  so it suffices to prove ${\boldsymbol{\lambda}} \in U$.
  The proof that ${\boldsymbol{\lambda}} \in U$
  makes use of
  Proposition~\ref{prop:ordinalHered}, Theorem~\ref{thm:VUIn},
  Theorem~\ref{thm:Vfacts}:\ref{thm:VSubq},
  (\ref{eqn:Lone}), (\ref{eqn:Lgbij}), (\ref{eqn:Lglam})
  and the closure properties of $U$.
\end{proof}

We can now easily conclude Tarski's Axiom A in Egal.

\begin{theorem}[Tarski A]\label{thm:tarskia}
  For each set $N$ there exists an $M$ such that%
  \begin{enumerate}
  \item $N\in M$,
  \item $\forall X\in M.\forall Y\subseteq X.Y\in M$,
  \item $\forall X\in M.\exists Z\in M.\forall Y \subseteq X. Y\in Z$ and
  \item $\forall X\subseteq M.(\exists f:\iota\to\iota.{\mathsf{bij}}~X~M~f)\lor X\in M$.
  \end{enumerate}
\end{theorem}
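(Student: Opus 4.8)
The plan is to derive Theorem~\ref{thm:tarskia} from the Grothendieck universe axioms of Egal together with Lemma~\ref{mainlem:tarskia}. Given a set $N$, I would take $M$ to be $\UnivOf{N}$. Conditions (1)--(3) then follow almost immediately from the universe axioms: condition (1) is exactly \textbf{Universe In}; condition (2) says $\UnivOf{N}$ is closed under subsets, which follows from \textbf{Universe Transitive} (so every member of $M$ is a subset of $M$) together with \textbf{Universe ZF closed} (in particular $\wp$-closed: if $X\in M$ then $\wp X\in M$, and any $Y\subseteq X$ satisfies $Y\in\wp X$, hence $Y\in M$ by transitivity applied to $\wp X\in M$); condition (3) is witnessed by $Z=\wp X$, which lies in $M$ by $\wp$-closedness, and every $Y\subseteq X$ satisfies $Y\in\wp X=Z$ by the \textbf{Power} axiom.

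The real content is condition (4). Here I would fix $X\subseteq M$ and argue by cases on whether $X\in M$. If $X\in M$ the right disjunct holds and we are done. If $X\notin M$, then $U:=\UnivOf{N}$ is a transitive ZF-closed set (by \textbf{Universe Transitive} and \textbf{Universe ZF closed}) with $X\subseteq U$ and $X\notin U$, so Lemma~\ref{mainlem:tarskia} applies and yields a bijection $f:\iota\to\iota$ taking $\{\alpha\in U\mid{\mathsf{ordinal}}~\alpha\}$ onto $X$. To finish I need a bijection of type $\iota\to\iota$ taking $X$ onto $M$, i.e. I need $\{\alpha\in U\mid{\mathsf{ordinal}}~\alpha\}$ to be equipotent (in the $\exists f{:}\iota\to\iota.{\mathsf{bij}}$ sense) with $M=U$ itself. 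For this I would show that the set of ordinals of $U$ equals $U$ is \emph{not} what we want --- rather, I would apply Lemma~\ref{mainlem:tarskia} a second time, or more cleanly observe: since $X\subseteq U$ and $X\notin U$, and since Lemma~\ref{mainlem:tarskia} in fact shows \emph{every} subset of $U$ not belonging to $U$ is equipotent with $\{\alpha\in U\mid{\mathsf{ordinal}}~\alpha\}$, I can note that $U$ is itself not a member of $U$ (Proposition~\ref{prop:Inirref}) and $U\subseteq U$, so $U$ is equipotent with $\{\alpha\in U\mid{\mathsf{ordinal}}~\alpha\}$ as well. Composing the two bijections (using Theorem~\ref{thm:bijinv} to invert one of them and the evident closure of ${\mathsf{bij}}$ under composition) gives a bijection $\iota\to\iota$ taking $X$ onto $M$, establishing the left disjunct.

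Concretely the steps are: (i) set $M:=\UnivOf{N}$ and discharge (1) from \textbf{Universe In}; (ii) discharge (2) and (3) from \textbf{Universe Transitive}, \textbf{Universe ZF closed} and \textbf{Power}; (iii) for (4), fix $X\subseteq M$, apply excluded middle to $X\in M$, handle the easy disjunct directly; (iv) in the hard disjunct, invoke Lemma~\ref{mainlem:tarskia} with $U:=M$ to get a bijection from ${\boldsymbol\lambda}:=\{\alpha\in M\mid{\mathsf{ordinal}}~\alpha\}$ onto $X$; (v) invoke Lemma~\ref{mainlem:tarskia} again with the subset $M\subseteq M$, $M\notin M$ to get a bijection from ${\boldsymbol\lambda}$ onto $M$; (vi) invert the first bijection via Theorem~\ref{thm:bijinv} and compose with the second to obtain $f:\iota\to\iota$ with ${\mathsf{bij}}~X~M~f$, yielding the left disjunct of (4).

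The main obstacle is step (v)/(vi): I must be careful that Lemma~\ref{mainlem:tarskia} really does apply with $U$ instantiated to $M$ itself (it does, since its hypotheses are only transitivity, ZF-closedness, $X\subseteq U$ and $X\notin U$, all available here with $X:=M$ via Proposition~\ref{prop:Inirref}), and that composition of two functions of type $\iota\to\iota$ again gives an ${\mathsf{bij}}$; both are routine but need the right lemmas about ${\mathsf{bij}}$ from the imported library. Everything else is a direct unfolding of the universe axioms. I expect the whole proof to be short once Lemma~\ref{mainlem:tarskia} is in hand --- indeed the excerpt already signals this by saying ``We can now easily conclude Tarski's Axiom A in Egal.''
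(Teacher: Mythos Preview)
Your proposal is correct and matches the paper's proof essentially step for step: take $M:=\UnivOf{N}$, discharge (1)--(3) from the universe axioms, and for (4) apply Lemma~\ref{mainlem:tarskia} twice (once to $X$, once to $U$ itself via $U\notin U$) and compose after inverting one of the resulting bijections with Theorem~\ref{thm:bijinv}. Your bookkeeping of which bijection to invert is in fact cleaner than the paper's own phrasing.
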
%
\begin{proof}
  We use $U := \UnivOf{N}$ as the witness for $M$.
  We know $N\in\UnivOf{N}$, $\UnivOf{N}$ is transitive and ZF-closed by
  the axioms of our set theory.
  All the properties except the last follow easily from these facts.
  We focus on the last property.
  Let $X\subseteq U$ be given. Since we are in a classical
  setting it is enough to assume $X\notin U$ and
  prove there is some bijection $f:\iota\to\iota$ taking $X$ onto $U$.
  Since $U\subseteq U$ and $U\notin U$ (using Proposition~\ref{prop:Inirref}),
  we know there is a bijection $g$ taking $\{\alpha\in U|{\mathsf{ordinal}}~\alpha\}$ onto $U$ by
  Lemma~\ref{mainlem:tarskia}.
  Since $X\subseteq U$ and $X\notin U$,
  we know there is a bijection $h$ taking $\{\alpha\in U|{\mathsf{ordinal}}~\alpha\}$ onto $X$ by
  Lemma~\ref{mainlem:tarskia}.
  By Theorem~\ref{thm:bijinv} there is a bijection $g^{-1}$ taking
  $X$ onto $\{\alpha\in U|{\mathsf{ordinal}}~\alpha\}$.
  The composition of $g^{-1}$ and $h$ yields a bijection $f$ taking
  $X$ onto $U$ as desired.
\end{proof}

\section{Grothendieck Universes in Mizar}\label{sec:grothuniv}

In this section we construct Grothendieck universes
using notions
introduced in the MML articles
\texttt{CLASSES1} and \texttt{CLASSES2}~\cite{CLASSES1.ABS,CLASSES2.ABS}.
For this purpose, first, we
briefly introduce the relevant constructions from these articles. 
We then define the notion of a Grothendieck universe of a set $A$
as a Mizar type, the type of all transitive sets
with $A$ as a member
that are closed under power sets and internal family unions.
Since Mizar types must be nonempty, we are required to construct
such a universe.
We finally introduce a functor \miz{GrothendieckUniverse'}$A$ that
returns the least set of the type. 
Additionally, we show that every such Grothendieck universe is closed under replacement
formulating the property as a Mizar scheme.%

To simplify notation we present selected Mizar operators
in more natural ways closer to informal mathematical practice.
In particular, we use $\emptyset$, $\in$, $\subseteq$, $\wp$, $|\cdot|$, $\bigcup$ to represent
Mizar symbols as \miz{{}}, \miz{in}, \miz{c=}, \miz{bool},
\miz{card}, \miz{union}, respectively.

Following Bancerek, we will start with the construction of the least Tarski universe
that contains a given set $A$.
Tarski's Axiom A directly implies that there exists a \miz{Tarski'set} $T_A$ that contains $A$
where \miz{Tarski} 
is a Mizar {\it attribute} (for more details see~\cite{JFR1980}) defined as follows:
\vspace{-0.03cm}%
\begin{lstlisting}
  attr'T'is'Tarski'means'::'CLASSES1:def 2
    T'is'subset-closed'&'(for'X'holds'X'$\in$'T'implies'$\wp$(X)$\in$'T)'&
    for'X'holds'X'$\subseteq$'T'implies'X,T'are_equipotent'or'X'$\in$'T;
\end{lstlisting}\vspace{-0.04cm}
Informally we say that $T$ is \miz{Tarski} to mean $T$ is closed under subset, power sets and each subset of $T$
 is a member of $T$ or is equipotent with $T$.
Then one shows that $\bigcap\{X|A\in X \subseteq T_A, X\mbox{ is } \miz{Tarski}\,\miz{set}\}$
is the least (with respect to inclusion) \miz{Tarski'set} that contains $A$, denoted by
\miz{Tarski-Class'}$A$.

By definition it is easy to prove the following:\vspace{-0.02cm}
\begin{theorem}\label{thm:TCfacts} The following facts hold.\vspace{-0.05cm}%
  \begin{enumerate}
  \item\label{thm:TC1} $\forall A.\: A \in \miz{Tarski-Class}\,A$,
  \item\label{thm:TC2} $\forall A\,X\,Y.\: Y \subseteq X \land X \in \miz{Tarski-Class}\, A\limplies Y \in \miz{Tarski-Class}\, A$,
  \item\label{thm:TC3} $\forall A\,X.\: Y \in \miz{Tarski-Class'} A\limplies \wp(X)\in \miz{Tarski-Class'} A$,
  \item\label{thm:TC4} $\forall A\,X.\: X \subseteq \miz{Tarski-Class}\,A\land |X| < |\miz{Tarski-Class}\,A|
  \limplies X \in \miz{Tarski-Class}\,A$.\vspace{-0.02cm}%
   \end{enumerate}
\end{theorem}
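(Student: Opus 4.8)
The plan is to obtain all four parts as immediate corollaries of the single fact recorded just above the statement: $\miz{Tarski-Class}\,A$ is the least \miz{Tarski} set having $A$ as a member, so in particular it is \emph{a} \miz{Tarski} set and $A\in\miz{Tarski-Class}\,A$. Instantiating \miz{CLASSES1:def 2} at $T := \miz{Tarski-Class}\,A$ then yields the three defining conjuncts of the \miz{Tarski} attribute for $\miz{Tarski-Class}\,A$: it is \miz{subset-closed}, it is closed under $\wp$, and every subset of it is either equipotent with it or one of its members.

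Parts~\ref{thm:TC1},~\ref{thm:TC2} and~\ref{thm:TC3} then require no real work. Part~\ref{thm:TC1} is the ``$A$ is a member'' clause of the construction of $\miz{Tarski-Class}\,A$. Part~\ref{thm:TC2} is exactly the meaning of \miz{subset-closed} (closure under the subset relation), and Part~\ref{thm:TC3} is exactly the power-set conjunct of \miz{CLASSES1:def 2}; in the Mizar text each should be discharged by a one-line \miz{by} appeal to the definition together with the fact that $\miz{Tarski-Class}\,A$ has the \miz{Tarski} attribute.

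The only part carrying content is Part~\ref{thm:TC4}, and here I would argue directly. Suppose $X\subseteq\miz{Tarski-Class}\,A$ and $|X| < |\miz{Tarski-Class}\,A|$. The third conjunct of \miz{CLASSES1:def 2} gives that $X$ is equipotent with $\miz{Tarski-Class}\,A$ or $X\in\miz{Tarski-Class}\,A$. Since equipotent sets have equal cardinality, the strict inequality excludes the first disjunct, so $X\in\miz{Tarski-Class}\,A$; this uses only the standard MML fact that equipotence implies equality of cardinals. The genuine difficulty is not in these four facts but in the fact established above, that $\bigcap\{X\mid A\in X\subseteq T_A,\ X\mbox{ is }\miz{Tarski}\}$ is again a \miz{Tarski} set --- the equipotence clause is not visibly stable under intersection --- but that is supplied by Bancerek's \miz{CLASSES1}, so from this point the proof is a routine unfolding of the definition.
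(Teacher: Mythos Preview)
Your proposal is correct and matches the paper's approach: the paper simply prefaces the theorem with ``By definition it is easy to prove the following,'' and your argument is exactly the unfolding of \miz{CLASSES1:def 2} at $T := \miz{Tarski-Class}\,A$ that this remark points to. Your handling of Part~\ref{thm:TC4} (ruling out the equipotence disjunct via the cardinality hypothesis) is the intended one-line step, and your side remark that the nontrivial work lives in Bancerek's construction of $\miz{Tarski-Class}\,A$ rather than in these four corollaries is apt.
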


\vspace{-0.05cm}
Tarski universes, as opposed to Grothendieck universes,
might not be transitive (called \miz{epsilon-transitive} in the MML)
but via transfinite induction. 
By Theorems 22 and 23 in~\cite{CLASSES1.ABS} we know
$\miz{Tarski-Class'}A$ is transitive
if $A$ is transitive.
Therefore, in our construction we
take the transitive closure of $A$ prior to the application of the
\miz{Tarski-Class} functor.
Using a recursion scheme we know
for a given set $A$
there exists a recursive sequence $f$ such that
$f(0)=A$ and $\forall k\in \mathbb{N}.\:f(k+1)=\bigcup f(k)$.
For such an $f$, $\bigcup \{f(n)|n \in \mathbb{N}\}$
is the least (with respect to the inclusion) transitive set that includes $A$
(or contains $A$ if we start with  $f(0)=\{A\}$).
The operator is defined in~\cite{CLASSES1.ABS} as follows:\vspace{-0.035cm}%
\begin{lstlisting}
  func'the_transitive-closure_of'A'->'set'means'::'CLASSES1:def'7
    for'x'holds'x'$\in$'it'iff'ex'f'being'Function,'n'being'Nat'st
      x'$\in$'f.n'&'dom'f'='$\mathbb{N}$'&'f.0'='A'&'for'k'being'Nat'holds'f.(k+1)'='$\bigcup$'f.k;
\end{lstlisting}\vspace{-0.025cm}%

We now turn to a formulation of ZF-closed property in Mizar.
It is obvious that $\wp$-closed, $\bigcup$-closed properties can we expressed as
two Mizar types as follows:\vspace{-0.04cm}%
\begin{lstlisting}
  attr'X'is'power-closed'means'for'A'being'set'st'A'$\in$'X'holds'$\wp$(A)'$\in$'X;
  attr'X'is'union-closed'means'for'A'being'set'st'A'$\in$'X'holds'$\bigcup$(A)'$\in$'X;
\end{lstlisting}\vspace{-0.04cm}%
Note that we cannot express the closure under replacement as a Mizar type
since each condition that occurs after \miz{means} has to be a first-order statement.
We must therefore use an alternative approach that uses closure under
internal family unions using the notion of a function as well as its domain (\miz{dom}) and range (\miz{rng})
as follows:\vspace{-0.045cm}%
\begin{lstlisting}
  attr'X'is'FamUnion-closed'means
    for'A'being'set'for'f'being'Function'st'dom'f'='A'&'rng'f'$\subseteq$'X'&'A'$\in$'X'
      holds'$\bigcup$'rng'f'$\in$'X;
\end{lstlisting}\vspace{-0.045cm}%

Comparing the properties of Tarski and Grothendieck universes we can prove the following:
\begin{theorem}\label{thm:clusters}
The following facts hold.\vspace{-0.035cm}%
\begin{enumerate}
  \item\label{thm:CL1}$\forall X. X\mbox{ is }\miz{Tarski} \limplies X \mbox{ is } \miz{subset-closed'power-closed},$
  \item\label{thm:CL2}$\forall X. X\mbox{ is }\miz{epsilon-transitive'Tarski}\limplies X \mbox{ is } \miz{union-closed} $,
  \item\label{thm:CL3}$\forall X. X\mbox{ is }\miz{epsilon-transitive'Tarski}\limplies X \mbox{ is } \miz{FamUnion-closed}$.\vspace{-0.035cm}%
\end{enumerate}
\end{theorem}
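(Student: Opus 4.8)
The plan is to dispatch Part~\ref{thm:CL1} by simply unfolding definitions: a \emph{Tarski} set is by definition \emph{subset-closed}, and the second conjunct in the definition of \emph{Tarski}, namely that $X\in T$ implies $\wp(X)\in T$, is literally the statement that $T$ is \emph{power-closed}, so nothing is left to prove. For Part~\ref{thm:CL2} I would fix an \emph{epsilon-transitive} \emph{Tarski} set $X$ and a member $A\in X$ and aim at $\bigcup A\in X$. Applying \emph{epsilon-transitivity} twice gives first $A\subseteq X$ and then $\bigcup A\subseteq X$. By the third conjunct in the definition of \emph{Tarski}, $\bigcup A\subseteq X$ already forces $\bigcup A\in X$ \emph{unless} $\bigcup A$ is equipotent with $X$, so the whole argument reduces to the cardinality bound $|\bigcup A| < |X|$. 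I would obtain this using standard facts about Tarski sets: from $A\in X$, power-closure gives $\wp(A)\in X$, hence $\wp(A)\subseteq X$ by \emph{epsilon-transitivity}, so $|A|<|\wp(A)|\le|X|$, and the same applies to every member of $A$; combined with the fact that the cardinal of a transitive Tarski set is not reached cofinally from below in this way (it catches all of its strictly smaller subsets and, in effect, is a rank-initial segment $V_\kappa$ for a strongly inaccessible $\kappa$ — the machinery developed around Theorem~\ref{thm:TCfacts} and in \texttt{CLASSES1}/\texttt{CLASSES2}), this yields $|\bigcup A|<|X|$ and hence $\bigcup A\in X$.

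For Part~\ref{thm:CL3} I would fix an \emph{epsilon-transitive} \emph{Tarski} set $X$, a member $A\in X$, and a function $f$ with $\dm f=A$ and $\mathsf{rng}\,f\subseteq X$, and reduce to Part~\ref{thm:CL2}. Since the range of a function is no larger than its domain, $|\mathsf{rng}\,f|\le|A|<|X|$ (the strict inequality exactly as above), so $\mathsf{rng}\,f$ is a subset of $X$ that is not equipotent with $X$; by the third conjunct in the definition of \emph{Tarski} we get $\mathsf{rng}\,f\in X$, and then Part~\ref{thm:CL2} applied to $\mathsf{rng}\,f$ gives $\bigcup\mathsf{rng}\,f\in X$, which is the definition of \emph{FamUnion-closed}.

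The hard part is precisely the cardinality estimate $|\bigcup A|<|X|$ used in Part~\ref{thm:CL2}: knowing only that $A$ and each of its members are strictly smaller than $X$ is not by itself enough, since the cardinalities of the members of $A$ could a priori be cofinal in $|X|$. Ruling that out is where one must lean on the deeper structure of Tarski sets — regularity of $|X|$, or equivalently that a transitive Tarski set coincides with some $V_\kappa$ — which is available from the \texttt{CLASSES1}/\texttt{CLASSES2} development rather than from the bare definition. Once that bound is in hand, Parts~\ref{thm:CL2} and~\ref{thm:CL3} close routinely as described, and Part~\ref{thm:CL1} is immediate.
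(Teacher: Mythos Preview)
Your proposal is correct and matches the paper's argument closely. Part~\ref{thm:CL1} is handled identically (unfolding the definition), and your Part~\ref{thm:CL3} is exactly the paper's proof: from $A\in X$ and subset-closure one gets $\wp(A)\subseteq X$, hence $|A|<|\wp(A)|\le|X|$, then $|\mathsf{rng}\,f|\le|A|<|X|$ forces $\mathsf{rng}\,f\in X$ via the third Tarski clause, and Part~\ref{thm:CL2} finishes it. The only difference is in Part~\ref{thm:CL2}: the paper simply cites the MML theorem \texttt{CLASSES2:59}, whereas you sketch the underlying argument (transitivity gives $\bigcup A\subseteq X$, and the nontrivial bound $|\bigcup A|<|X|$ requires the regularity/inaccessibility facts about transitive Tarski sets developed in \texttt{CLASSES1}/\texttt{CLASSES2}). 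You correctly diagnose where the real work lies and that it must be imported from that development rather than derived from the bare definition, so your account is essentially a spelled-out version of the paper's citation.
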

\begin{proof}
Part~\ref{thm:CL1} is an easy consequences of the \miz{Tarski} definition and properties of
powersets. Part~\ref{thm:CL2} is a direct conclusion of
the MML theorem \miz{CLASSES2:59}.
To prove \ref{thm:CL3} let $X$ be an \miz{epsilon-transitive'Tarski'set}, $A$ be a set and
$f$ be a function such that $\miz{dom'} f={A}$, $\miz{rng'} f \subseteq X$, $A \in X$.
Since $X$ is \miz{subset-closed} as a \miz{Tarski'set} and $A \in X$, we know that $\wp (A) \subseteq X$.
By Cantor's theorem we conclude that $|A| < |\wp (A)|$ and consequently $|A| < |X|$.
Since $|\miz{rng'} f| \leq |\miz{dom'} f|=|A|$, we know that  $\miz{rng'} f$ is not equipotent with $X$.
Then $\miz{rng'} f \in X$ since $X$ is \miz{Tarski} and $\miz{rng'} f\subseteq X$, and finally
 $\bigcup \miz{rng'} f \in X$ by Part~\ref{thm:CL2}.
\end{proof}\vspace{-0.11cm}

We can now easily infer from Theorem~\ref{thm:clusters} that
the term:\vspace{-0.055cm}%
\begin{equation}\label{GrothendieckOfTerm}
\miz{Tarski-Class(the_transitive-closure_of'$\{$A$\}$)}
\end{equation}\vspace{-0.055cm}
is suitable to prove that the following Mizar type is inhabited:\vspace{-0.04cm}%
\begin{lstlisting}
  mode'Grothendieck'of'A'->'set'means
     A'$\in$'it'&'it'is'epsilon-transitive'power-closed'FamUnion-closed;
\end{lstlisting}\vspace{-0.04cm}%
Now it is a simple matter to construct
the Grothendieck universe of a given set $A$
(\miz{GrothendieckUniverse'}$A$)
since $\bigcap\{X|X \subseteq G_A, X\mbox{ is } \miz{Grothendieck'of'}A\}$
is the least (with respect to the
inclusion) \miz{Grothendieck'of'}$A$, where $G_A$ denotes the term \eqref{GrothendieckOfTerm}.

As we noted earlier, we cannot express the closure under replacement property
as a Mizar type or even assumption in a Mizar theorem.
However we can express and prove that every \miz{Grothendieck'of'}$A$
satisfies this property as a scheme as follows:\vspace{-0.05cm}%
\begin{lstlisting}
  scheme'ClosedUnderReplacement
      {A()'->'set,'U()'->'Grothendieck'of'A(),F(set)'->'set}:
    {F(x)'where'x'is'Element'of'A():'x'$\in$'A()}'$\in$'U()
  provided
    for'X'being'set'st'X'$\in$'A()'holds'F(X)'$\in$'U()
\end{lstlisting}\vspace{-0.05cm}
The proof uses a function that maps each $x$ in \miz{A()} to \miz{$\{$F(}$x$\miz{)$\}$}.\footnote{Note that in Mizar schemes, schematic variables such as $A$ must be given
as \miz{A()} to indicate $A$ is a term with no dependencies.}

\section{Future Work}\label{sec:futurework}

The present work sets the stage for two future possibilities:
translating Mizar's MML into Egal
and translating Egal developments into Mizar articles.
Translating the MML into Egal is clearly possible
in principle, but will be challenging in practice.
The ``obvious'' inferences allowed by Mizar would
need to be elaborated for Egal. Furthermore,
the implicit inferences done by Mizar's soft typing
system would need to be made explicit for Egal.
A general translation from Egal developments to Mizar articles is not possible in principle
(since Egal is higher-order) although we have shown it is often possible
in practice (by handcrafting equivalent first-order formulations of concepts).
There is no reason to try to translate the small Egal library to Mizar,
but it might be useful to have a partial translation for Egal developments
that remain within the first-order fragment.
With such a translation a user could formalize a mathematical development in Egal
and automatically obtain a Mizar article.

\section{Conclusion}\label{sec:concl}

\vspace{-0.01cm}%
We have presented the foundational work required in order
to port formalizations from Mizar to Egal or Egal to Mizar.
In Egal this required a nontrivial proof of Tarski's Axiom A,
an axiom in Mizar. In Mizar this required finding
equivalent first-order representations for the relevant
higher-order terms and propositions used in Egal
and then constructing a Grothendieck universe operator in Mizar.
\vspace{-0.01cm}%
\section*{Acknowledgment}
\vspace{-0.01cm}%
This work has been supported by the European Research Council (ERC)
Consolidator grant nr. 649043 \textit{AI4REASON}
and the Polish National Science Center
granted by decision n$\!^\circ$DEC-2015/19/D/ST6/01473.

\bibliographystyle{splncs04}

\end{document}